\documentclass{article}
\makeatletter
\usepackage{setspace}
\usepackage{rotating}
\usepackage{multirow}
\usepackage{mathtools}
\usepackage{longtable}
\usepackage{enumitem}
\usepackage{tikz}
\usepackage{longtable}
\usetikzlibrary{shapes.geometric, arrows}

\tikzstyle{startstop} = [rectangle, rounded corners, 
text centered, 
draw=black]

\tikzstyle{io} = [trapezium, 
trapezium stretches=true, 
trapezium left angle=70, 
trapezium right angle=110,text centered, 
draw=black,text width=3cm,]

\tikzstyle{process} = [rectangle,  
draw=black,text width=1.5cm, text centered]

\tikzstyle{decision} = [diamond, 
draw=black,text width=1.5cm, text centered]
\tikzstyle{arrow} = [thick,->,>=stealth]
\usepackage{epsfig}
\usepackage{rotating}
\usepackage{amssymb}
\usepackage{xcolor}
\usepackage[T1]{fontenc}
\usepackage{lmodern}
\usepackage[utf8]{inputenc}
\usepackage{booktabs}
\usepackage{amsmath,amssymb}
\usepackage{multirow}
\usepackage[ruled,vlined]{algorithm2e}
\usepackage{caption}
\usepackage{subcaption}
\usepackage{lscape,float}
\usepackage{graphicx}
\usepackage{xcolor}
\usepackage{soul}
\usepackage{setspace}
\usepackage[toc]{appendix}

\usepackage{natbib,hyperref,doi}
\makeatletter
\renewcommand\@biblabel[1]{#1.}
\makeatother
\usepackage[noend]{algpseudocode}
\usepackage{enumitem}

\usepackage{amsmath}
\usepackage{amsthm}
\usepackage{enumitem}

\newtheorem{theorem}{Theorem}[section]

\newtheorem{lemma}[theorem]{Lemma}
\newtheorem{proposition}[theorem]{Proposition}
\makeatletter
\renewcommand\@biblabel[1]{#1.}
\makeatother

\usepackage{float,lscape}
 \providecommand{\keywords}[1] {
  \textit{Keywords:} #1}
\title{Optimal Design for Generalized Progressive Hybrid Censored Data via Constrained, Unconstrained, Compound, and Minimax Optimization}
\author{Rathin Das$^{1,2}$\footnote
    {Corresponding author, e-mail: rathindas65@gmail.com},~  Tanmay Sen$^2$, and Deepak Prajapati$^1$ \\
    $^1$ Decision Sciences Area, Indian Institute of Management Lucknow, Lucknow, India\\
    $^2$ SQC \& OR Unit, Indian Statistical Institute, Kolkata-700108, India}

\usepackage{siunitx}
\usepackage{amsthm}
\usepackage[english]{babel}
\newtheorem*{remark}{Remark}

\newtheorem*{rmk*}{Remark}
\date{}
\begin{document}

\maketitle

\begin{abstract}

This paper studies the optimal design of Type-I generalized progressive hybrid censoring schemes for life-testing experiments. The design problem involves simultaneously determining the inspection time, the guaranteed number of failures, and the progressive censoring scheme. First we develop a cost-constrained optimization framework for determining the optimal censoring scheme. Structural properties of the A-optimality criterion and the experimental cost with respect to the inspection time and the guaranteed number of failures are established. It reveals that they are conflicting behaviors which enables to develop an efficient search algorithm that substantially reduces the computational burden. Building on these theoretical results, a multi-objective optimization model is proposed to simultaneously minimize A-optimality criterion and the experimental cost. A Variable Neighborhood Search (VNS) algorithm is proposed to efficiently determine the optimal progressive removal vector by exploring the feasible design space while avoiding exhaustive enumeration. The resulting compromise designs simultaneously improve estimation precision and reduce experimental cost. In addition, the Shannon differential entropy of the observed lifetime distribution is derived and employed as a complementary information-theoretic measure for evaluating the selected censoring schemes. Numerical studies show that entropy-optimal designs generally differ from A-optimal designs, indicating that Shannon entropy characterizes uncertainty in the observed data rather than estimation precision. The proposed methodology provides an efficient computational framework for optimal life-test design and offers a foundation for future multi-objective optimization incorporating statistical efficiency, experimental cost, and information-theoretic uncertainty.

\end{abstract}

\keywords{A-optimal design, Entropy optimization, Multi-objective optimization, Minimax optimization, Variable neighbourhood search, Weibull distribution, Compound Design}
\section{Introduction}

Life-testing experiments play a central role in reliability engineering, quality control, and industrial product development, where the objective is to assess the lifetime characteristics of products and estimate reliability measures with limited experimental resources. However, due to time, cost and other resource limitations, censored life tests are conducted to collect lifetime information. Among the classical censoring mechanisms, Type-I censoring terminates the experiment at a predetermined time, whereas Type-II censoring terminates after observing a predetermined number of failures. Hybrid censoring schemes combine the advantages of both approaches by allowing the experiment to terminate according to either a specified time or a specified number of failures. The Type-I hybrid censoring scheme was introduced by \citet{Epstein_1954}. Comprehensive reviews of hybrid censoring schemes and their applications can be found in \citet{Bala_hybrid}.

Although hybrid censoring reduces experimental duration, it does not permit the removal of surviving units during the life test. To address this limitation, \citet{Cohen_1963} introduced the progressive Type-II censoring scheme (PCS-II), in which surviving units may be progressively withdrawn at each observed failure. Progressive censoring provides considerable operational flexibility by reducing maintenance costs and allowing testing facilities to be utilized more efficiently. However, when highly reliable products are tested, PCS-II experiments may still require an excessively long testing period. To overcome this difficulty, \citet{Kundu_2006} proposed the Type-I progressive hybrid censoring scheme (PHCS-I). This schemes successfully combine progressive censoring with hybrid termination rules, but each possesses inherent limitations. In particular, PHCS-I may terminate after observing only a small number of failures.

To improve the flexibility of PCHS-I, \citet{cho2015exact} proposed type-I generalized progressive hybrid censoring scheme (GPHCS-I). Under GPHCS-I, a minimum number of failures is guaranteed before termination while simultaneously restricting the experiment to a predetermined censoring time. Consequently, GPHCS-I provides a desirable compromise between obtaining sufficient failure information and controlling the experimental duration. When the guaranteed number of failures is zero, GPHCS-I reduces to the conventional PHCS-I scheme.

Since their introduction, GPHCS-I have attracted considerable attention in the statistical literature. Existing research has primarily focused on statistical inference, including parameter estimation, Bayesian inference, prediction, reliability analysis, and confidence interval construction under various lifetime distributions. Representative contributions include \citet{sen2018fisher}, \citet{singh2021inference}, \citet{elshahhat2022statistical}, \citet{singh2024partially}, \citet{hassan2024bayesian}, and \citet{xin2024statistical}. Despite these developments, comparatively little attention has been devoted to the design of optimal life-testing experiments under GPHCS-I. In practice, the selection of the sample size, censoring time, guaranteed number of failures, and progressive removal scheme has a direct influence on the statistical efficiency, information content, testing duration, and experimental cost. Consequently, determining an efficient life-testing plan naturally leads to a challenging optimization problem involving multiple conflicting objectives.

Optimal experimental design has traditionally been developed using information matrix-based criteria, such as D-optimality and A-optimality, together with various economic design criteria (\cite{shah2012theory}). Among these, the A-optimality criterion seeks to minimize the trace of the inverse Fisher information matrix and therefore minimizes the average variance of the parameter estimators. However, Fisher information quantifies only estimation precision and does not fully characterize the information contained in the observed experiment.

Information-theoretic measures provide a complementary perspective for evaluating experimental designs. Following the concept of expected Shannon information introduced by \citet{lindley1956measure}, entropy has become an important measure of information in Bayesian experimental design. Motivated by this idea, we derive the Shannon entropy under the GPHCS-I scheme and develop an entropy-based optimality criterion for life-testing experiments. Unlike Fisher information, Shannon entropy measures the uncertainty associated with the observed experiment and therefore provides an alternative assessment of the quality of an experimental design (see \citet{xu2015reference,sen2020inference}).

Besides statistical efficiency and information content, economic considerations are equally important in practical reliability experiments. Longer testing durations and larger numbers of observed failures generally improve estimation accuracy but inevitably increase the experimental cost. Therefore, practical experimental design should simultaneously consider statistical precision, information content, and economic efficiency. Such conflicting objectives naturally motivate the development of constrained and multi-objective optimization models.

For multi-objective optimization, \citet{bhattacharya2020implementation} first introduced the concept of compound optimal design in life-testing experiments. Subsequently, the methodology has been extended to various reliability models and censoring schemes by \citet{chakraborty2023cumulative, chakraborty2025application, dhameliya2024compound, dhameliya2025optimizing}. Although the compound optimality approach provides an effective framework for balancing conflicting objectives, its performance depends on the selection of preference weights and the normalization of the objective functions. Moreover, the monotonicity properties of the individual objectives are not directly incorporated into the optimization procedure, limiting the possibility of reducing the search space. In practice, the determination of appropriate weights often relies on graphical analyses of trade-off curves, which become increasingly computationally demanding as the number of objectives increases. In particular, extending the compound design framework beyond two objectives is challenging because graphical weight selection is no longer straightforward.

Motivated by these limitations, this paper proposes a minimax optimization framework for determining optimal GPHCS-I designs. The proposed approach exploits the monotonicity properties of the objective and cost functions to reduce the feasible search space. Therefore it improves computational efficiency. Furthermore, unlike the compound optimality approach, the minimax formulation can be readily extended to accommodate more than two conflicting objectives without requiring graphical weight selection or reformulation of the optimization problem. These features make the proposed framework computationally efficient and practically attractive for reliability design problems involving multiple competing objectives.

This paper develops a comprehensive optimization framework for GPHCS-I life-testing experiments. We first investigate unconstrained optimal designs based on A-optimality, entropy, and experimental cost. Constrained A-optimal and entropy-optimal designs are then developed under fixed budget limitations. Finally, we formulate a multi-objective optimization problem to simultaneously optimize statistical efficiency and experimental cost using both minimax and compound formulations. The minimax formulation is particularly attractive because it provides balanced compromise solutions, naturally extends to additional objectives such as entropy. 

Another important contribution of this paper is the development of theoretical properties that substantially simplify the optimization problem. We establish monotonicity results for the A-optimality criterion and the experimental cost with respect to the censoring time and the guaranteed number of failures. These analytical results enable a considerable reduction in the feasible search space and are incorporated into a variable neighborhood search algorithm, resulting in significant computational savings compared with exhaustive optimization.  The main contributions of this paper are summarized as follows.

\begin{enumerate}
\item We derive the Shannon entropy under the Type-I generalized progressive hybrid censoring scheme and propose an entropy-based criterion for optimal life-testing design.

\item We develop unconstrained optimal designs based on A-optimality, entropy, and experimental cost, together with constrained optimal designs under budget limitations.

\item We establish theoretical monotonicity properties of the A-optimality criterion and the experimental cost, leading to substantial reductions in the optimization search space.

\item We formulate multi-objective optimization models based on both minimax and compound criteria for simultaneously optimizing statistical efficiency and experimental cost.

\item We develop an efficient variable neighborhood search algorithm for determining optimal GPHCS-I designs under unconstrained, constrained, and multi-objective settings.

\item Extensive numerical studies under Weibull lifetime models demonstrate the effectiveness of the proposed optimization framework and illustrate the advantages of GPHCS-I over the conventional progressive hybrid censoring scheme.
\end{enumerate}

The remainder of the paper is organized as follows. Section~\ref{prob} introduces GPHCS-I and the associated statistical model. Section~\ref{design} derives the Fisher information matrix and the Shannon entropy and presents the corresponding optimality criteria. Section~\ref{proce} develops the proposed optimization formulations for the single-objective design problem together with the variable neighborhood search algorithm. Section~\ref{multi-obj} extends the methodology to the multi-objective setting by developing both minimax and compound optimization formulations. Section~\ref{numerical} presents comprehensive numerical studies for the unconstrained, constrained, and multi-objective optimal designs. Section~\ref{data} illustrates the proposed methodology through a real-life data analysis and provides the optimal design parameters under different experimental settings. Finally, Section~\ref{con} concludes the paper and discusses directions for future research.

\section{Problem  Statement}\label{prob}
The objective of this paper is to determine efficient life-testing designs under GPHCS-I. This section formulates the optimization problem by introducing the GPHCS-I model, the design variables, and the experimental cost function. These components are then integrated into a generic optimization framework, which serves as the foundation for the optimality criteria and optimization algorithms developed in the subsequent 
\subsection{Type-I Generalized Progressive Hybrid Censoring Scheme}
Consider a life-testing experiment in which $n$ identical and independent test units are placed on test under the GPHCS-I. Let $X_1,\ldots,X_n$ denote the lifetimes of the test units, each having common cumulative distribution function (CDF) $F(\cdot\mid\boldsymbol{\theta})$, probability density function (PDF) $f(\cdot\mid\boldsymbol{\theta})$, and hazard function $h(\cdot\mid\boldsymbol{\theta})$, where $\boldsymbol{\theta}$ denotes the unknown parameter vector. Let \(
X_{1:n}\le X_{2:n}\le\cdots\le X_{n:n}
\) be the corresponding order statistics. Let $m$ denote the maximum number of failures to be observed during the life-testing experiment, $T$ be the predetermined censoring time, and $l$ $(0\leq l<m)$ be the guaranteed minimum number of failures. Let \(
\mathbf{R}=(R_1,R_2,\ldots,R_m)\)
be the preassigned progressive removal vector satisfying \(\sum_{i=1}^{m}R_i=n-m.\)

Under the progressive censoring scheme, $n$ identical and independent test units are initially placed on test. At the time of the first observed failure, $R_1$ of the remaining $n-1$ surviving units are randomly withdrawn from the experiment. Subsequently, at the time of the second observed failure, $R_2$ of the remaining $n-R_1-2$ surviving units are removed. This progressive withdrawal mechanism continues according to the predetermined removal vector $\mathbf{R}$ until the experiment is terminated. Let \(
X_{1:m:n},X_{2:m:n},\ldots,X_{m:m:n}\)
denote the observed progressively censored failure times.

In GPHCS-I, the termination rule of the experiments as follows: If the censoring time occurs before the $l$th failure, that is, $T<X_{l:m:n}$, the experiment continues until the $l$th failure is observed and terminates at $X_{l:m:n}$. Otherwise, if the $l$th failure occurs before time $T$, the experiment proceeds until either the censoring time or the $m$th failure is reached. Consequently, if $X_{d:m:n}<T<X_{d+1:m:n}$ for some $l\le d\le m-1$, the experiment terminates at time $T$ after observing $d$ failures. On the other hand, if the $m$th failure occurs before the censoring time, that is, $X_{m:m:n}<T$, the experiment terminates immediately after the $m$th failure. The stopping time is therefore given by
\(\tau=\max\left\{X_{l:m:n},\min\left(X_{m:m:n},T\right)\right\}.\) It is evident that GPHCS-I reduces to the PHCS-I when $l=0$.
 The PDF and CDF of the $i$th progressively censored order statistic $X_{i:m:n}$ are
\begin{equation}
f_{i:m:n}(x\mid\boldsymbol{\theta})
=\sum_{k=1}^i {c_{k,i}}{N_{k-1}}
(1-F(x\mid\boldsymbol{\theta}))^{N_{k-1}-1}f(x\mid\boldsymbol{\theta}),
\end{equation}
and
\begin{equation}
F_{i:m:n}(x\mid\boldsymbol{\theta})
=
1-\sum_{k=1}^i {c_{k,i}}
(1-F(x\mid\boldsymbol{\theta}))^{N_{k-1}},
\end{equation}
respectively, where
\[
N_{j-1}=n-\sum_{k=1}^{j-1} R_k-(j-1),
\quad
c_{k,i}=\prod_{\substack{j=1\\j\neq k}}^i
\frac{N_{j-1}}{N_{j-1}-N_{k-1}}.
\]
Let $D$ denote the total number of failures observed before termination. Under the GPHCS-I scheme, the pair $(D,\tau)$ is given by
\[
(D,\tau)=
\begin{cases}
(l,X_{l:m:n}), & T<X_{l:m:n},\\
(D_1,T), & X_{d_1:m:n}<T<X_{d_1+1:m:n}, ~l\leq d_1\leq m-1\\
(m,X_{m:m:n}), & X_{m:m:n}<T.
\end{cases}
\]
\subsection{Decision Variables and Cost Model}
For fixed values of the sample size $n$ and maximum number of observed failure $m$, the objective is to determine the optimal GPHCS-I design by selecting the progressive removal vector, the censoring time, and the guaranteed minimum number of failures. Accordingly, the design decision vector is defined as \( \boldsymbol{\zeta}=(\boldsymbol{R},T,l)\), where $\boldsymbol{R}=(R_1,\ldots,R_m)$. Let $\varphi(\boldsymbol{\zeta})$ be generic the optimality criteria, which is minimized or maximized to determine the optimal value of $\boldsymbol{\zeta}$.

Now, life testing must also account for an economic constraint. Therefore, the experimenter must design the life test under budget constraints. Let $TC(\boldsymbol{\boldsymbol{\zeta}})$ denote the total cost of running a life test under a GPHCS-I. Note that the total cost depends on the following costs as follows:
\begin{enumerate}[label=\Roman*.]
    \item Operation Cost: This involves the cost of running a life test, including the cost of utilities and the salary of the operators. Let $\tau$ denote the duration of the life test. Furthermore, let $C_\tau$ be the operating cost per unit of time. Then the resulting total operating cost is $C_\tau E[\tau]$, where $E[\tau]$ is the expected duration of the PIC-I scheme and is given by
  \[
E[\tau]=E[X_{l:m:n}]+E[X_{m:m:n}\wedge T]-E[X_{l:m:n}\wedge T].\]
    where \[E[X_{i:m:n}\wedge T]=\int_0^{T}(1-F_{i:m:n}(x\mid\boldsymbol{\theta}))\,dx.\]

\item Failure Cost: This refers to the expenses incurred from the failure of a test unit during a life test experiment, encompassing the costs associated with identifying the failure mode, performing a rework, scrapping the defective unit, and any additional procedures necessary to understand the failure. Let $C_D$ be the cost associated with a failed test unit. So, the resulting total cost is given by $C_DE[D]$, where $E[D]$ is the expected number of failures and is given by 
\begin{align*}
    E[D]=l[1-F_{l:m:n}(T\mid\boldsymbol{\theta})]+\sum_{d=l}^{m-1}d\Pr(X_{d:m:n}<T<X_{d+1:m:n})+mF_{m:m:n}(T\mid\boldsymbol{\theta}).
\end{align*}
For $l<d<m$, 
\begin{align*}
    &\Pr(X_{d:m:n}<T<X_{d+1:m:n})\\=& C_{d-1}\bigl(1 - F(T\mid\boldsymbol{\theta})\bigr)^{N_d}
\sum_{i=0}^{d-1}
C_{i,d-1}(R_1+1,\ldots,R_{d-1}+1)
\frac{
1 - \bigl(1 - F(T\mid\boldsymbol{\theta})\bigr)^{R'_{i0}}
}{
R'_{i0}
},
\end{align*}
where $C_{l-1} = n (n - R_1 - 1)\cdots (n - \sum_{i=1}^{l-1} R_i - (l-1))$, \(
C_{i,r}(a_1,\ldots,a_r)=\frac{(-1)^i}{\left[\prod_{j=1}^{i}
\sum\limits_{k=r-i+j}^{r}
a_k\right]\left[
\prod\limits_{j=1}^{r-i}
\sum_{k=j}^{r-i}
a_k\right]
}
\), \(R'_{i0}=\sum_{j=d-i}^{d} (R_j + 1)\) and \(R''_{0}=n - d - \sum_{j=1}^{d} R_j=N_d.\)
\end{enumerate}
Thus, the total cost corresponding to a life test is given by
\begin{align*}
    TC(\boldsymbol{\boldsymbol{\zeta}})=C_0+C_\tau E[\tau]+C_DE[D].
\end{align*}
Let $C_B$ be the available budget for conducting the life test. Then the general constrained optimization problem can be formulated as
\begin{align}
\max_{\boldsymbol{\zeta}}\text{ or }\min_{\boldsymbol{\zeta}}\quad
& \varphi(\boldsymbol{\zeta}) \nonumber\\
\text{subject to}\quad
& TC(\boldsymbol{\zeta})\le C_B,\nonumber\\
& \sum_{i=1}^{m}R_i=n-m,\nonumber\\
& 0\le l\le m-1,\nonumber\\
& R_i,l\in\mathbb{N}\cup\{0\},\qquad i=1,\ldots,m.
\label{eq:general_problem}
\end{align}

%
\section{Design Criteria}\label{design}
This section introduces the optimality criteria considered in this paper for determining efficient GPHCS-I life-testing designs. Two complementary criteria are employed. The first is the A-optimality criterion, which is based on the Fisher information matrix and aims to improve the precision of parameter estimation. The second is an information-theoretic criterion based on Shannon entropy, which quantifies the uncertainty associated with the observed experiment. The expression of the Fisher information matrix is given in Section \ref{fisher}.  The derivation of the entropy function is given in  Section \ref{ent}. 
\subsection{Fisher information under GPHCS-I}\label{fisher}
The Fisher information matrix under the GPHCS-I scheme was derived by \citet{sen2018fisher}. For completeness, the required expression is summarized below.
\[
I_{X_{l:m:n}\vee(X_{m:m:n}\wedge T)}(\boldsymbol\theta)
=
I_{1,\ldots,l:m:n}(\boldsymbol\theta)
+
I_{X_{m:m:n}\wedge T}(\boldsymbol\theta)
-
I_{X_{l:m:n}\wedge T}(\boldsymbol\theta),
\]
with
\[
I_{1,\ldots,r:m:n}(\boldsymbol\theta)
=
\int_0^\infty
\left[
\nabla_{\boldsymbol{\theta}}
\ln h(x\mid\boldsymbol\theta)
\right]\left[
\nabla_{\boldsymbol{\theta}}
\ln h(x\mid\boldsymbol\theta)
\right]^T
\sum_{i=1}^r f_{i:m:n}(x\mid\boldsymbol\theta)\,dx,
\]
\[
I_{X_{m:m:n}\wedge T}(\boldsymbol\theta)
=
\int_0^{T}
\left[
\nabla_{\boldsymbol{\theta}}
\ln h(x\mid\boldsymbol\theta)
\right]\left[
\nabla_{\boldsymbol{\theta}}
\ln h(x\mid\boldsymbol\theta)
\right]^T
\sum_{i=1}^m f_{i:m:n}(x\mid\boldsymbol{\theta})\,dx,
\]
\[
I_{X_{l:m:n}\wedge T}(\boldsymbol\theta)
=
\int_0^{T}
\left[
\nabla_{\boldsymbol{\theta}}
\ln h(x\mid\boldsymbol\theta)
\right]\left[
\nabla_{\boldsymbol{\theta}}
\ln h(x\mid\boldsymbol\theta)
\right]^T
\sum_{i=1}^l f_{i:m:n}(x\mid\boldsymbol{\theta})\,dx,
\]
where $\nabla_{\boldsymbol{\theta}}
A(\boldsymbol{\theta})=\left(\frac{\partial}{\partial\theta_1}A(\boldsymbol{\theta}),\ldots,\frac{\partial}{\partial\theta_k}A(\boldsymbol{\theta})\right)$.
Based on the Fisher information matrix derived in Section~\ref{fisher}, the A-optimality criterion is defined as
\[
\phi(\boldsymbol{\zeta})
=
\operatorname{tr}
\left\{
I^{-1}(\boldsymbol{\theta};\boldsymbol{\zeta})
\right\},
\]
where $\operatorname{tr}(\cdot)$ denotes the trace operator. Since the trace of the inverse Fisher information matrix represents the total variance of the parameter estimators, the A-optimal design is obtained by solving
\[
\min_{\boldsymbol{\zeta}}
\;
\phi(\boldsymbol{\zeta}).
\]

\subsection{Shannon Entropy under GPHCS-I}\label{ent}
Although the Fisher information matrix provides an effective measure of the precision of parameter estimation, it does not directly quantify the information content of the observed censored sample. An alternative measure is the Shannon entropy, which characterizes the uncertainty associated with the observed data and has been widely adopted as an information-theoretic criterion in optimal experimental design. For continuous random variables, the Shannon entropy may be either positive or negative depending on the underlying distribution and its parameters. Following \citet{lindley1956measure}, the negative entropy can be interpreted as a measure of the information contained in the observed sample. Consequently, minimizing the Shannon entropy is equivalent to maximizing the information extracted from the life-testing experiment. Therefore, the Shannon entropy provides a natural criterion for designing efficient life-testing experiments under censoring schemes (see also \citet{xu2015reference,sen2020inference}).

 The following theorem establishes the Shannon entropy of the GPHCS-I by exploiting the decomposition property of progressively censored order statistics.







\begin{theorem}
The entropy under GPHCS-I data is given by
\(
\mathcal{H}_{X_{l:m:n} \vee (X_{m:m:n} \wedge T)} = \mathcal{H}_{1,\ldots,l:m:n} + \mathcal{H}_{X_{m:m:n} \wedge T} - \mathcal{H}_{X_{l:m:n} \wedge T},
\)
wwhere $\mathcal{H}_{1,\ldots,l:m:n}$ denotes the entropy under PCS-II, and $\mathcal{H}_{X_{m:m:n}\wedge T}$ and $\mathcal{H}_{X_{l:m:n}\wedge T}$ denote the entropies under PHCS-I.

\end{theorem}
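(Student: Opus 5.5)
The plan is to reproduce, at the level of entropies, the same decomposition that \citet{sen2018fisher} used for the Fisher information. The starting point is the hazard-type representation of the entropy of a progressively censored sample. For PCS-II with $r$ observed failures the joint density factorizes as a product of $f(x_i)$ times survival powers. Writing $f = h\,(1-F)$ and integrating the log-likelihood term by term against the marginal densities $f_{i:m:n}$, one gets
\[
\mathcal{H}_{1,\ldots,r:m:n}=-\ln C_{r-1}^{*}-\int_0^\infty \ln h(x\mid\boldsymbol\theta)\sum_{i=1}^r f_{i:m:n}(x\mid\boldsymbol\theta)\,dx+\text{(survival term)} .
\]
Here $C^*_{r-1}$ is the normalizing constant. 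The survival contributions should collapse to an integral of the form $\int_0^\infty \sum_{i=1}^r f_{i:m:n}(x)\,dx$ type terms; in the standard representation they telescope into $r$ minus constants. Analogously, for PHCS-I truncated at $\min(X_{r:m:n},T)$, I will write the entropy of the observed data $(D,X_{1:m:n},\dots,X_{D:m:n})$ by conditioning on $D$. The result is the same expression with every integral restricted to $[0,T]$, plus the discrete entropy of $D$ folded into the constants and survival terms evaluated at $T$.

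The second step is the pathwise identity on which everything rests. The observed GPHCS-I sample is determined by $\max\{X_{l:m:n},\min(X_{m:m:n},T)\}$. I will show that the log-likelihood of the GPHCS-I data equals the log-likelihood of the first $l$ PCS-II failures, plus that of the PHCS-I data censored at $X_{m:m:n}\wedge T$, minus that of the PHCS-I data censored at $X_{l:m:n}\wedge T$. I will check this separately in the three cases:
\begin{itemize}
\item[(i)] $T<X_{l:m:n}$: the last two terms coincide and cancel, leaving the PCS-II likelihood.
\item[(ii)] $X_{l:m:n}<T<X_{m:m:n}$: the first and third terms both describe the first $l$ failures, which cancel except for the survival factor at $T$. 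This factor is exactly absorbed into the second term.
\item[(iii)] $X_{m:m:n}<T$: the analogous cancellation occurs.
\end{itemize}
Equivalently, the indicator of each failure $X_{i:m:n}$ being observed satisfies $\mathbf 1\{i\le l\}+\mathbf 1\{X_{i:m:n}\le T,\ i\le m\}-\mathbf 1\{X_{i:m:n}\le T,\ i\le l\}$. This is the same bookkeeping that produces $\sum_{i\le l}f_{i:m:n}$ over $[0,\infty)$, plus $\sum_{i\le m}f_{i:m:n}$ over $[0,T]$, minus $\sum_{i\le l}f_{i:m:n}$ over $[0,T]$ in the Fisher information formula.

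The third step takes expectations of minus this log-likelihood identity. By linearity of expectation, the three pieces give exactly $\mathcal{H}_{1,\ldots,l:m:n}$, $\mathcal{H}_{X_{m:m:n}\wedge T}$ and $\mathcal{H}_{X_{l:m:n}\wedge T}$, since each piece is the log-density of the corresponding sub-experiment evaluated on the same underlying progressive sample.

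I expect the main obstacle to be the normalizing constants and the survival factors $(1-F(T))^{\cdot}$ in case (ii). The PHCS-I density at $T$ includes the remaining-units exponent $N_d$ and a different combinatorial constant. I must check that the constant for the first $l$ failures, $\ln C_{l-1}$, cancels between the first and third terms. I must also check that the term $N_d\ln(1-F(T))$ in the GPHCS-I density matches that of the $X_{m:m:n}\wedge T$ experiment. I will verify this by writing each density in the product form $\prod_{j\le d} N_{j-1} f(x_j)(1-F(x_j))^{R_j}\cdot(1-F(T))^{N_d}$. Once in that form the cancellation is immediate, and the rest is linearity.
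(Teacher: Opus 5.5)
Your proof is correct, but it takes a different route from the paper's. The paper works at the level of entropies, through the chain rule. It defines $Y_i=X_{i:m:n}$ for $i\le l$ and $Y_i=(X_{i:m:n}\wedge T,\mathbb{I}(X_{i:m:n}\le T))$ for $i>l$. By the Markov property it factors the density of $(Y_1,\ldots,Y_m)$ into $f_{1,\ldots,l:m:n}$ times a product of one-step conditionals. It identifies each step $j>l$ with a censored-minimum entropy $\mathcal{H}_{Z_{1:N_{j-1}}\wedge T}$. Finally it rewrites $\sum_{j=l+1}^{m}$ as $\sum_{j=1}^{m}-\sum_{j=1}^{l}$ and reads the two partial sums as the PHCS-I entropies.

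You instead prove the likelihood identity $L_{\mathrm{GPHCS}}\,L_{X_{l:m:n}\wedge T}=L_{1,\ldots,l:m:n}\,L_{X_{m:m:n}\wedge T}$ on every sample path and then take expectations. Write all four likelihoods in the common form $\prod_{j\le D}N_{j-1}f(x_j)(1-F(x_j))^{R_j}\,(1-F(\tau))^{N_D}$. Then in each of your three cases, $L_{\mathrm{GPHCS}}$ coincides with one factor on the right and $L_{X_{l:m:n}\wedge T}$ with the other, because the corresponding pairs $(D,\tau)$ agree.

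What your route buys:
\begin{itemize}
\item The argument is short and uses one dominating measure (counting measure on $D$ times Lebesgue measure) for all four experiments.
\item It never needs the step the paper passes over quickly: that the averaged conditional contributions for $j>l$ are the same whether the $l$th coordinate is seen uncensored (GPHCS-I) or censored at $T$ (PHCS-I).
\item The same pathwise identity also yields the Fisher-information decomposition of \citet{sen2018fisher} via $I=-E[\nabla^2_{\boldsymbol\theta}\ln L]$ (under the usual regularity conditions). This explains why the two decompositions have the same shape.
\end{itemize}
The paper's chain-rule route, in return, splits each PHCS-I entropy into per-step terms, which is what leads to explicit expressions such as the one quoted from \citet{almohaimeed2017entropy}. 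Your Step 1 is not needed for the identity itself.

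Two small corrections. First, to split the expectation by linearity you should assume the four entropies are finite, so that no $\infty-\infty$ arises. Second, in case (ii) the first and third likelihoods cancel \emph{exactly}. Since $X_{l:m:n}<T$, the PHCS-I run to $X_{l:m:n}\wedge T$ stops at $X_{l:m:n}$, so its survival factor is $(1-F(x_l))^{N_l}$, the same as in $f_{1,\ldots,l:m:n}$. There is no leftover factor at $T$ to absorb. The factor $(1-F(T))^{N_d}$ belongs entirely to the $X_{m:m:n}\wedge T$ term, which by itself equals the GPHCS-I likelihood, as your final paragraph correctly anticipates.
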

\begin{proof}
Let us define
\[
Y_i = 
\begin{cases}
X_{i:m:n} & \text{for } i = 1, 2, \ldots, l, \\
\left(X_{i:m:n} \wedge T,\ \mathbb{I}(X_{i:m:n} \leq T)\right) & \text{for } i = l+1, \ldots, m,
\end{cases}
\]
where \( \mathbb{I}(\cdot) \) denotes the indicator function.

Then the joint density of GPHCS-I data $(Y_1, \ldots, Y_l,$ $ Y_{l+1}, \ldots, Y_m)$ is decomposed by using Markov property of order statistics as 
\small\begin{eqnarray}
	\label{ent1a}
	f_{1,\ldots,l,l+1,\ldots,m:m:n}(y_1,\ldots,y_l,y_{l+1},\ldots,y_{m}\mid\boldsymbol{\theta}) 
	=f_{1,\ldots,l:m:n}(y_1,\ldots,y_{l}\mid\boldsymbol{\theta})\times f_{l+1,\ldots,m|l:m:n}(y_{l+1},\ldots,y_{m}|y_l\mid\boldsymbol{\theta}). 
\end{eqnarray}
\normalsize
We denote the entropy of the joint densities  $f_{1,\ldots,l:m:n}(y_1,\ldots,y_{l}\mid\boldsymbol{\theta})$ and $ f_{l+1,\ldots,m|l:m:n}(y_{l+1},\ldots,y_{m}|y_l\mid\boldsymbol{\theta})$ are  \( \mathcal{H}_{1,\ldots,l:m:n} \) and \( \mathcal{H}_{l+1,\dots,m \mid l:m:n} \) respectively. Hence,
\[
\mathcal{H}_{X_{l:m:n}\vee(X_{m:m:n}\wedge T)}
=
\mathcal{H}_{1,\ldots,l:m:n}
+
\mathcal{H}_{l+1,\ldots,m\,|\,l:m:n},
\]
where $\mathcal{H}_{1,\ldots,l:m:n}$ represents the entropy under PCS-II and is given by (\citet{balakrishnan2007testing})
\[
\mathcal{H}_{1,\ldots,l:m:n} = -\ln(C_{l-1}) + l- \left( \int_0^\infty \sum_{i = 1}^{l}f_{i:m:n}(x) \ln h(x) \, dx \right).
\]

To derive the entropy $\mathcal{H}_{l+1,\dots,m \mid l:m:n}$, we need to decompose the conditional joint density function $f_{l+1,\ldots,m|l:m:n}(y_{l+1},\ldots,y_{m}|y_l\mid\boldsymbol{\theta})$ as \
\begin{eqnarray*}
    f_{l+1,\ldots,m|l:m:n}(y_{l+1},\ldots,y_{m}|y_l\mid\boldsymbol{\theta}) = \prod_{j=l+1}^{m}f_{j|j-1:m:n}(y_j|y_{j-1}\mid\boldsymbol{\theta}),
\end{eqnarray*}
where $f_{j \mid j-1:m:n}(y_j \mid y_{j-1}; \theta)$ represents the conditional density of $Y_{j:m:n}$ given $Y_{j-1:m:n} = y_{j-1}$. The entropy of $f_{l+1,\dots,m \mid l:m:n}(y_{l+1}, \dots, y_m; \theta)$ can be written as 
\allowdisplaybreaks\begin{eqnarray*}
    \mathcal{H}_{l+1,\dots,m \mid l:m:n} &=& \sum_{j=l+1}^{m} H_{j:m:n \mid j-1:m:n}\\
    &=& \sum_{j=l+1}^{m} \mathcal{H}_{Z_{1:N_{j-1}} \wedge T}\\
    &=& \sum_{j=1}^{m} \mathcal{H}_{Z_{1:N_{j-1}} \wedge T}-\sum_{j=1}^{l} \mathcal{H}_{Z_{1:N_{j-1}} \wedge T}\\
     &=&\mathcal{H}_{X_{m:m:n} \wedge T} - \mathcal{H}_{X_{l:m:n} \wedge T},
\end{eqnarray*}
where $\mathcal{H}_{X_{m:m:n} \wedge T}$ and $\mathcal{H}_{X_{l:m:n} \wedge T}$ represent the entropies under PCS-II and is given by (\citet{almohaimeed2017entropy})
\allowdisplaybreaks\begin{eqnarray}
 \mathcal{H}_{X_{i:m:n} \wedge T}
&=& \sum_{j=1}^{m}\left(1 - \log N_{j-1} \right) 
\left(1 -  \binom{N_{j-1}}{j-1} \, {(1-F(T))}^{N_{j-1}+1} \right) 
- \int_0^{T} \log h(t) \, f_{j:m:n}(t) \, dt. \label{eqn_phcs}
\end{eqnarray}
Then from (\ref{ent1a}), we derive the entropy under GPHCS-I as
\begin{eqnarray*}
\mathcal{H}_{X_{l:m:n} \vee (X_{m:m:n} \wedge T)}&=&{\mathcal{H}}_{1,\cdots,l:m:n}+ \mathcal{H}_{l+1,\dots,m \mid l:m:n}\\
	&=&{\mathcal{H}}_{1,\cdots,l:m:n}+\mathcal{H}_{X_{m:m:n} \wedge T} - \mathcal{H}_{X_{l:m:n} \wedge T}
\end{eqnarray*}
\end{proof}
 Therefore, the entropy-optimal design is obtained by solving
\[
\min_{\boldsymbol{\zeta}}
\;
\mathcal{\phi}(\boldsymbol{\zeta})=\mathcal{H}_{X_{l:m:n} \vee (X_{m:m:n} \wedge T)}.
\]
\section{Procedure for Cost-constraint Optimization}\label{proce}
This section develops the optimization methodology for determining the A-optimal and entropy-optimal GPHCS-I designs. We first establish several theoretical properties of the total cost of the experiment and the A-optimality criterion. These results reveal the monotonic behavior of the objective and cost functions with respect to the design variables and are subsequently exploited to substantially reduce the computational burden of the optimization procedure. Based on these properties, efficient algorithms are proposed to determine the optimal GPHCS-I designs under budget constraints.

To investigate the monotonicity properties of the total expected experimental cost, it is first necessary to study the behavior of the expected test duration and the expected number of observed failures. The following propositions establish the monotonicity properties of these quantities, which are subsequently used to derive the monotonicity of the total expected cost and to develop efficient optimization algorithms.
\begin{proposition}
    For fixed $(\boldsymbol{R},l)$ the expected duration under GPHCS-I is increasing in $T$
\end{proposition}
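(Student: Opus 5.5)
The plan is to differentiate the closed-form expression for the expected duration directly. From Section~\ref{prob},
\[
E[\tau]=E[X_{l:m:n}]+E[X_{m:m:n}\wedge T]-E[X_{l:m:n}\wedge T],
\qquad
E[X_{i:m:n}\wedge T]=\int_0^{T}\bigl(1-F_{i:m:n}(x\mid\boldsymbol{\theta})\bigr)\,dx .
\]
For fixed $(\boldsymbol{R},l)$, the first term $E[X_{l:m:n}]$ does not depend on $T$. Hence only the two truncated means vary with $T$, and each is an integral with variable upper limit and an integrand that does not involve $T$.

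The fundamental theorem of calculus then gives
\[
\frac{d}{dT}E[\tau]=\bigl(1-F_{m:m:n}(T\mid\boldsymbol{\theta})\bigr)-\bigl(1-F_{l:m:n}(T\mid\boldsymbol{\theta})\bigr)
=F_{l:m:n}(T\mid\boldsymbol{\theta})-F_{m:m:n}(T\mid\boldsymbol{\theta}).
\]
So the claim reduces to showing that $F_{l:m:n}(T)\ge F_{m:m:n}(T)$ for all $T$.

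The key step is the stochastic ordering of progressively censored order statistics. Since $X_{l:m:n}\le X_{m:m:n}$ almost surely (the observed failures are ordered and $l<m$), we have $\{X_{m:m:n}\le T\}\subseteq\{X_{l:m:n}\le T\}$. Therefore $F_{m:m:n}(T)\le F_{l:m:n}(T)$, and the derivative is nonnegative, so $E[\tau]$ is nondecreasing in $T$.

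Strict increase needs $\Pr(X_{l:m:n}\le T<X_{m:m:n})>0$. This holds whenever $F(T\mid\boldsymbol{\theta})\in(0,1)$, which is the case for the Weibull model with $T>0$. One can see this from the product of the positive conditional densities, or from the explicit mixture form of $F_{i:m:n}$. In the boundary case $l=0$ one takes $X_{0:m:n}=0$, so $F_{0:m:n}\equiv1$ and the derivative is $1-F_{m:m:n}(T)>0$.

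The argument avoids any manipulation of the alternating coefficients $c_{k,i}$. Verifying the ordering through those coefficients would be the only real obstacle, and the almost-sure ordering sidesteps it entirely.

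A cleaner alternative route is the identity $\tau=\max\{X_{l:m:n},\min(X_{m:m:n},T)\}$. This quantity is pathwise nondecreasing in $T$, so monotonicity of $E[\tau]$ follows immediately by taking expectations. I would present the derivative computation as the main proof and use this pathwise remark as a check on it.
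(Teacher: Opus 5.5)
Your proof is correct and follows the paper's argument. Like the paper, you differentiate the truncated-mean representation of $E[\tau]$ via the fundamental theorem of calculus, obtain $F_{l:m:n}(T\mid\boldsymbol{\theta})-F_{m:m:n}(T\mid\boldsymbol{\theta})$, and get nonnegativity from the almost-sure ordering $X_{l:m:n}\le X_{m:m:n}$; your strictness discussion (including the $l=0$ convention) and the pathwise check through $\tau=\max\{X_{l:m:n},\min(X_{m:m:n},T)\}$ are valid refinements that the paper does not include.
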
\label{r1}
    \begin{proof}
    We know that $E[\tau]=E[X_{l:m:n}]+E[X_{m:m:n}\wedge T]-E[X_{l:m:n}\wedge T].$ Now we want to prove that \(E[\tau]\) is increasing in \(T\). Using the identity
\[E[X\wedge t]=\int_0^{t}\Pr(X>x)\,dx,
\]
we obtain
\begin{align*}
  E[\tau]&=E[X_{l:m:n}]+\int_0^{T}\Pr(X_{m:m:n}>x)\,dx-\int_0^{T}\Pr(X_{l:m:n}>x)\,dx\\
  &=E[X_{l:m:n}]+\int_0^{T}\left[\Pr(X_{m:m:n}>x)-\Pr(X_{l:m:n}>x)\right]dx.
\end{align*}
Differentiating with respect to \(T\),
\[\frac{d}{dT}E[\tau]=\Pr(X_{m:m:n}>T)-\Pr(X_{l:m:n}>T).\]
Since $X_{l:m:n}$ and $X_{m:m:n}$ are progressively censored order statistics, the $l$-th observed failure time can never exceed the $m$-th observed failure time whenever $l\leq m$. Therefore \(\Pr(X_{m:m:n}>T)\geq\Pr(X_{l:m:n}>T).\) Hence,
\[
\frac{d}{dT}E[\tau]\geq 0.
\] Therefore,
\(
E[\tau]
\)
is increasing in \(T\).

\end{proof}
\begin{proposition}\label{r2}
    For fixed $(\boldsymbol{R},T)$, the expected duration is increasing in $l$.
\end{proposition}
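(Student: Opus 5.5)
The plan is to rewrite the expected duration so that the dependence on $l$ sits in a single term that is visibly monotone. By the decomposition used in the proof of the preceding proposition,
\[
E[\tau]=E[X_{l:m:n}]+E[X_{m:m:n}\wedge T]-E[X_{l:m:n}\wedge T].
\]
With $(\boldsymbol R,T)$ fixed, the middle term $E[X_{m:m:n}\wedge T]$ does not involve $l$. The other two terms combine via the elementary identity $x - x\wedge T = (x-T)^+$, which gives
\[
E[\tau]=E[X_{m:m:n}\wedge T]+E\bigl[(X_{l:m:n}-T)^+\bigr].
\]
Equivalently, using the integral representation $E[X\wedge t]=\int_0^t\Pr(X>x)\,dx$ from the previous proposition,
\[
E[\tau]=E[X_{m:m:n}\wedge T]+\int_T^\infty \Pr(X_{l:m:n}>x)\,dx .
\]
This also agrees with the pathwise form $\tau=\max\{X_{l:m:n},X_{m:m:n}\wedge T\}$, since $\max(a,b\wedge T)=(b\wedge T)+(a-T)^+$ whenever $a\le b$.

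It then suffices to compare $l$ with $l+1$ for $0\le l<l+1\le m-1$. Here the key fact is the one already invoked in the previous proposition: the progressively censored order statistics are almost surely ordered, $X_{l:m:n}\le X_{l+1:m:n}$. Hence $\Pr(X_{l:m:n}>x)\le \Pr(X_{l+1:m:n}>x)$ for every $x$. Integrating over $(T,\infty)$ gives
\[
\int_T^\infty \Pr(X_{l:m:n}>x)\,dx\le \int_T^\infty \Pr(X_{l+1:m:n}>x)\,dx,
\]
so $E[\tau]$ is nondecreasing in $l$. Alternatively, the pathwise bound $\tau_l\le\tau_{l+1}$ follows directly from the monotonicity of $\max\{\cdot,X_{m:m:n}\wedge T\}$ in its first argument, and taking expectations finishes the argument. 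The boundary case $l=0$ needs a separate remark: there $X_{0:m:n}\equiv 0$, the added term vanishes, and the scheme reduces to PHCS-I.

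The main point needing care is the finiteness of the expectations, which is required for the rearrangement $E[X_{l:m:n}]-E[X_{l:m:n}\wedge T]=E[(X_{l:m:n}-T)^+]$. This follows from $E[X]<\infty$ for the underlying lifetime distribution (e.g. Weibull), because $X_{l:m:n}\le X_{n:n}$. A secondary subtlety is that the statement says ``increasing''. We only get weak monotonicity in general. Strictness holds whenever $\Pr(X_{l+1:m:n}>T)>0$ and $X_{l:m:n}<X_{l+1:m:n}$ with positive probability, which is automatic for a continuous lifetime distribution with support $(0,\infty)$. I would state the result as nondecreasing and note when it is strict.
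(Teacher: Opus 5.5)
Your proof is correct and follows essentially the same route as the paper: isolate the $l$-free term $E[X_{m:m:n}\wedge T]$, write the remainder as $\int_T^\infty \Pr(X_{l:m:n}>x)\,dx$, and integrate the ordering $\Pr(X_{l:m:n}>x)\le\Pr(X_{l+1:m:n}>x)$ over $(T,\infty)$. Your additions go beyond the paper in useful ways: you justify finiteness (which the paper only assumes), and the pathwise bound $\tau_l\le\tau_{l+1}$ avoids the finiteness issue altogether. For the strictness aside, the exact condition is $\Pr\bigl(X_{l+1:m:n}>\max\{T,X_{l:m:n}\}\bigr)>0$ rather than your two separate conditions; it does hold, as you say, for continuous lifetimes with support $(0,\infty)$.
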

\begin{proof}
The expected duration can be written as $E[\tau]=E[X_{m:m:n}\wedge T]+g(l),$ where \(
g(l)=E[X_{l:m:n}]-E[X_{l:m:n}\wedge T]. \) It is seen that the first term is independent of \(l\). Hence, it is sufficient to show that \(g(l)\) is increasing in \(l\). Now $g(l)$ can be written as
\begin{align*}
    g(l)=\int_{T}^\infty P(X_{l:m:n}>x)~ dx
\end{align*}
Now we want to prove that $g(l_1)\leq g(l_2)$, for \(l_1<l_2\). 
We know that for $l_1<l_2$, $P(X_{l_1:m:n}>x)\leq P(X_{l_2:m:n}>x)$, for all $x$. Then 
\begin{align*}
   &\int_T^\infty P(X_{l_1:m:n}>x)~dx\leq \int_{T}^\infty P(X_{l_2:m:n}>x)~dx
\end{align*}
Assume that both integrals are finite. This completes the proof.
\end{proof}
\begin{proposition}\label{r3}
  \begin{enumerate}
      \item For fixed $(\boldsymbol{R},l)$, the expected number of failures is increasing in $T$.
      \item For fixed $(\boldsymbol{R},T)$, the expected number of failures is increasing in $l$. 
  \end{enumerate}  
\end{proposition}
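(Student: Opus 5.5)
The cleanest approach is to write the number of observed failures as a sum of indicators, mirroring the stopping-time representation $\tau=\max\{X_{l:m:n},\min(X_{m:m:n},T)\}$. Under GPHCS-I the $i$th failure is observed if and only if either $i\le l$, or $l<i\le m$ and $X_{i:m:n}\le T$. Hence
\[
D=\sum_{i=1}^{l}1+\sum_{i=l+1}^{m}\mathbb{I}(X_{i:m:n}\le T),
\qquad
E[D]=l+\sum_{i=l+1}^{m}F_{i:m:n}(T\mid\boldsymbol{\theta}).
\]
Equivalently, $D=\max\{l,\,D_T\}$, where $D_T=\sum_{i=1}^{m}\mathbb{I}(X_{i:m:n}\le T)$ is the number of failures under PHCS-I. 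The first step is to verify this representation against the three-case description of $(D,\tau)$. In the case $T<X_{l:m:n}$ we have $D=l$, and all indicators with $i>l$ vanish. In the middle case we have $D=d_1$. In the last case we have $D=m$. This check also confirms the displayed $E[D]$ formula: telescoping gives $\Pr(X_{d:m:n}<T<X_{d+1:m:n})=F_{d:m:n}(T)-F_{d+1:m:n}(T)$, and summing by parts reproduces the expression above.

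For part 1, fix $(\boldsymbol{R},l)$. Each term $F_{i:m:n}(T\mid\boldsymbol{\theta})$ is a CDF in $T$, so it is nondecreasing, and the constant $l$ does not depend on $T$. So $E[D]$ is increasing in $T$. One can also argue pathwise: $D=\max\{l,D_T\}$ and $D_T$ is nondecreasing in $T$. If differentiability is wanted, as in the proof of Proposition~\ref{r1}, the derivative is $\sum_{i=l+1}^m f_{i:m:n}(T)\ge 0$.

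For part 2, fix $(\boldsymbol{R},T)$ and compare $l$ with $l+1$:
\[
E[D]_{l+1}-E[D]_{l}=1-F_{l+1:m:n}(T\mid\boldsymbol{\theta})=\Pr(X_{l+1:m:n}>T)\ge 0.
\]
Induction over $l$ then gives monotonicity for any $l_1<l_2$. Pathwise this is immediate from $D=\max\{l,D_T\}$, which is nondecreasing in $l$ for every realization; taking expectations finishes the proof.

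The main obstacle is purely bookkeeping. One has to make sure the paper's expression for $E[D]$, with its first term $l[1-F_{l:m:n}(T)]$ and boundary conventions at $d=l$, agrees with the indicator representation. Using the pathwise form $D=\max\{l,D_T\}$ sidesteps the explicit $C_{i,r}$ coefficients entirely, so I would base the argument on it and only remark on the consistency with the closed-form expression.
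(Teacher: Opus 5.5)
Your proposal is correct and follows essentially the paper's route: reduce $E[D]$ to $l$ plus a sum of the CDFs $F_{i:m:n}(T\mid\boldsymbol{\theta})$ and use the monotonicity of CDFs in $T$. Your version is slightly more careful in two places. Your lower summation index $l+1$ is the correct one, whereas the paper's final display sums from $d=l$; this slip does not affect the conclusion. Your explicit increment $E[D]_{l+1}-E[D]_l=\Pr(X_{l+1:m:n}>T)\ge 0$, or equivalently the pathwise identity $D=\max\{l,D_T\}$, supplies the justification for monotonicity in $l$ that the paper only asserts as clear.
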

\begin{proof}
    We know that \(E[D]=l\bigl[1-F_{l:m:n}(T\mid\boldsymbol{\theta})\bigr]+\sum_{d=l+1}^{m-1}d\,\Pr(X_{d:m:n}<T<X_{d+1:m:n})+mF_{m:m:n}(T\mid\boldsymbol{\theta}).\) We want to prove that \(E[D]\) is increasing in \(T\). It is observed that
\[
\Pr(X_{d:m:n}<T<X_{d+1:m:n})
=
F_{d:m:n}(T\mid\boldsymbol{\theta})
-
F_{d+1:m:n}(T\mid\boldsymbol{\theta}).
\]
Hence,
\[
\begin{aligned}
E[D]
&=l\bigl[1-F_{l:m:n}(T\mid\boldsymbol{\theta})\bigr]+\sum_{d=l}^{m-1}d\left[F_{d:m:n}(T\mid\boldsymbol{\theta})-F_{d+1:m:n}(T\mid\boldsymbol{\theta})\right]+mF_{m:m:n}(T\mid\boldsymbol{\theta})\\
&=l+\sum_{d=l}^{m}F_{d:m:n}(T\mid\boldsymbol{\theta}).
\end{aligned}
\]
Since CDF is increasing in $T$. Therefore, it is clearly shows that $E[D]$ is increasing in $T$. Also, $E[D]$ is clearly increasing in $l$.

\end{proof}
Using propositions~\ref{r1}--\ref{r3}, the following monotonicity properties of the total expected experimental cost are obtained.
\begin{theorem}\label{th3}
   The total expected cost \( TC(\boldsymbol{\zeta}) \) under the GPHCS-I scheme satisfies the following monotonicity properties.
    \begin{enumerate}[label=\Roman*]
        \item For fixed $(\boldsymbol{R},l)$, $\phi(\boldsymbol{\boldsymbol{\zeta}})$ is increasing in $T$.
        \item For fixed $(\boldsymbol{R},T)$, $\phi(\boldsymbol{\boldsymbol{\zeta}})$ is increasing in $l$.     
    \end{enumerate}
\end{theorem}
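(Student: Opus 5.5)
The statement is written with $\phi(\boldsymbol{\zeta})$, but the theorem is announced as a property of the total expected cost. I therefore read $\phi$ here as $TC(\boldsymbol{\zeta})=C_0+C_\tau E[\tau]+C_D E[D]$, and the plan is to reduce both parts to the propositions already established. Throughout I assume the cost coefficients satisfy $C_0\ge 0$, $C_\tau\ge 0$ and $C_D\ge 0$, which is implicit in the cost model of Section~\ref{prob}. $C_0$ is a fixed set-up cost that depends on none of the design variables. Hence every variation of $TC$ with respect to $T$ or $l$ comes from $E[\tau]$ and $E[D]$. A nonnegative linear combination of nondecreasing functions is nondecreasing, so monotonicity of these two quantities in each variable is all that is needed.

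For part I, fix $(\boldsymbol{R},l)$ and take $T_1<T_2$.
\begin{itemize}
\item The proposition on $E[\tau]$ in $T$ gives $E[\tau](T_1)\le E[\tau](T_2)$. That proof differentiated $E[\tau]$ in $T$ and used the stochastic ordering $X_{l:m:n}\le X_{m:m:n}$.
\item Proposition~\ref{r3}(1) gives $E[D](T_1)\le E[D](T_2)$. There the representation $E[D]=l+\sum_{d=l}^{m}F_{d:m:n}(T\mid\boldsymbol{\theta})$ exhibits $E[D]$ as a sum of CDFs evaluated at $T$.
\end{itemize}
Multiplying by $C_\tau\ge 0$ and $C_D\ge 0$ and adding $C_0$ yields $TC(\boldsymbol{R},T_1,l)\le TC(\boldsymbol{R},T_2,l)$.

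For part II, fix $(\boldsymbol{R},T)$ and take $l_1<l_2$ with $0\le l_1<l_2\le m-1$.
\begin{itemize}
\item Proposition~\ref{r2} gives $E[\tau](l_1)\le E[\tau](l_2)$. It used the decomposition $E[\tau]=E[X_{m:m:n}\wedge T]+\int_T^\infty P(X_{l:m:n}>x)\,dx$ together with the stochastic ordering of the progressively censored order statistics in their index.
\item Proposition~\ref{r3}(2) gives $E[D](l_1)\le E[D](l_2)$.
\end{itemize}
The same nonnegative combination then gives $TC(\boldsymbol{R},T,l_1)\le TC(\boldsymbol{R},T,l_2)$.

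The main point needing care is Proposition~\ref{r3}(2), since the paper only asserts it as "clear" and the lower summation index also moves with $l$. If I had to justify it, I would argue directly from the definition of $D$ rather than from the closed form. Pathwise, $D=\max\{l,\min(m,K(T))\}$, where $K(T)=\#\{i\le m: X_{i:m:n}\le T\}$ is the number of observed failures by time $T$. This quantity is pointwise nondecreasing in $l$, and also in $T$ because $K(T)$ is nondecreasing. Taking expectations gives both monotonicity claims at once, and this route is cleaner than manipulating the formula. Equivalently, $E[D]=l+\sum_{d=l+1}^{m}F_{d:m:n}(T)$: moving from $l$ to $l+1$ adds $1$ and removes $F_{l+1:m:n}(T)\le 1$, so the net change is nonnegative.

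Similarly, $\tau=\max\{X_{l:m:n},\min(X_{m:m:n},T)\}$ is pathwise nondecreasing in both $l$ and $T$. This gives a one-line backup for the duration claims, under the same finiteness of $E[X_{l:m:n}]$ assumed in Proposition~\ref{r2}.
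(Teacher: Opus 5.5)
Your proposal is correct and is essentially the paper's argument: the paper obtains Theorem~\ref{th3} directly from Propositions~\ref{r1}--\ref{r3} through $TC(\boldsymbol{\zeta})=C_0+C_\tau E[\tau]+C_D E[D]$ with nonnegative coefficients, exactly as you do. Your side remarks are also sound: the telescoped identity is indeed $E[D]=l+\sum_{d=l+1}^{m}F_{d:m:n}(T\mid\boldsymbol{\theta})$ (the paper's lower index $d=l$ adds a spurious $F_{l:m:n}(T\mid\boldsymbol{\theta})$ term, which does not affect monotonicity), and the pathwise forms $D=\max\{l,\min(m,K(T))\}$ and $\tau=\max\{X_{l:m:n},\min(X_{m:m:n},T)\}$ give a clean independent check of both parts.
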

\begin{theorem}\label{th1}
    The design criterion $\phi(\boldsymbol{\boldsymbol{\zeta}})=\text{trace}[I^{-1}(\boldsymbol{\theta}\mid\boldsymbol{\zeta})]$ satisfies the following monotonicity properties:
    \begin{enumerate}[label=\Roman*]
        \item For fixed $(\boldsymbol{R},l)$, $\phi(\boldsymbol{\boldsymbol{\zeta}})$ is decreasing in $T$.
        \item For fixed $(\boldsymbol{R},T)$, $\phi(\boldsymbol{\boldsymbol{\zeta}})$ is decreasing in $l$.     
    \end{enumerate}
\end{theorem}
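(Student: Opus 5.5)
The plan is to prove both parts by showing that the Fisher information matrix $I(\boldsymbol\theta;\boldsymbol\zeta)$ increases in the Loewner order. We then use that $A\mapsto \operatorname{tr}(A^{-1})$ is decreasing on positive definite matrices: if $A\succeq B\succ 0$, then $B^{-1}\succeq A^{-1}$, so $\operatorname{tr}(A^{-1})\le \operatorname{tr}(B^{-1})$. Set $\psi(x)=[\nabla_{\boldsymbol\theta}\ln h(x\mid\boldsymbol\theta)][\nabla_{\boldsymbol\theta}\ln h(x\mid\boldsymbol\theta)]^T$, which is positive semidefinite for every $x$. Write $S_r(x)=\sum_{i=1}^r f_{i:m:n}(x\mid\boldsymbol\theta)$. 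The key identity is $\sum_{i=1}^r f_{i:m:n}(x)=\sum_{i=1}^r\,[F_{i-1:m:n}(x)-F_{i:m:n}(x)]\,\cdot$(hazard-type factor). More simply, we use that $S_r(x)\ge 0$ and $S_r$ is nondecreasing in $r$ pointwise.

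First, rewrite the expression from Section~\ref{fisher} as
\[
I(\boldsymbol\theta;\boldsymbol\zeta)=\int_0^T \psi(x)S_m(x)\,dx+\int_T^\infty \psi(x)S_l(x)\,dx .
\]
This holds because $I_{1,\ldots,l:m:n}-I_{X_{l:m:n}\wedge T}$ is exactly the integral of $\psi S_l$ over $(T,\infty)$.

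For part I, we differentiate in $T$, with $(\boldsymbol R,l)$ fixed. By the fundamental theorem of calculus,
\[
\frac{d}{dT}I=\psi(T)\,[S_m(T)-S_l(T)]=\psi(T)\sum_{i=l+1}^m f_{i:m:n}(T)\succeq 0,
\]
since each density is nonnegative and $\psi(T)\succeq0$. Hence $I$ is Loewner-nondecreasing in $T$. Because $dI^{-1}/dT=-I^{-1}(dI/dT)I^{-1}\preceq 0$, taking traces gives that $\phi$ is decreasing in $T$.

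For part II, fix $(\boldsymbol R,T)$ and compare $l$ with $l+1$. Only the second integral changes, by
\[
\int_T^\infty \psi(x)f_{l+1:m:n}(x)\,dx\succeq 0 .
\]
So $I$ increases in $l$, and the same matrix-monotonicity argument gives that $\phi$ is decreasing in $l$.

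The main obstacle is the technical requirement that $I$ be nonsingular, so that $\phi$ is finite, for every design compared. I would assume this as a standing identifiability condition; for the Weibull model it holds as soon as at least two failures are expected with positive probability. The integrals must also converge, which the paper implicitly assumes. The remaining issue is interchanging differentiation and integration, which is trivial here because $T$ appears only as a limit of integration.
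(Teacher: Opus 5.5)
Your proof is correct and follows essentially the paper's argument: the Fisher information increases in the Loewner order because each increment is the PSD matrix $\psi$ integrated against nonnegative densities; inversion reverses that order, and taking traces gives the claim. The only differences are cosmetic (a derivative in $T$ where the paper compares $T_1<T_2$). Your rewriting $I=\int_0^T\psi S_m+\int_T^\infty\psi S_l$ gives the correct increment $\int_T^\infty\psi f_{l+1:m:n}$ in $l$, whereas the paper displays it with limits $\int_0^T$; this slip is harmless, since only positive semidefiniteness is used.
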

\begin{proof}
\textbf{Proof of part (I): }  Let $\boldsymbol{\boldsymbol{\zeta}}_1=(\boldsymbol{R},T,l)$ and $\boldsymbol{\boldsymbol{\zeta}}_2=(\boldsymbol{R},T,l+1)$.  Then, it is enough to show that $\phi(\boldsymbol{\boldsymbol{\zeta}}_2)\leq \phi(\boldsymbol{\boldsymbol{\zeta}}_1)$.
    \begin{align*}
    &I(\boldsymbol{\theta}\ | \ \boldsymbol{\boldsymbol{\zeta}}_2)-I(\boldsymbol{\theta}\ | \ \boldsymbol{\boldsymbol{\zeta}}_1)=\int_0^{T}
\left[
\nabla_{\boldsymbol{\theta}}
\ln h(x\mid\boldsymbol\theta)
\right]\left[
\nabla_{\boldsymbol{\theta}}
\ln h(x\mid\boldsymbol\theta)
\right]^T
 f_{(l+1):m:n}(x\mid\boldsymbol{\theta})\,dx.
\end{align*}
Now, for any  $\boldsymbol{a}(\boldsymbol{\theta})\neq 0$, we have
\begin{align*}
   & \boldsymbol{a}(\boldsymbol{\theta})^T[I(\boldsymbol{\theta}\ | \ \boldsymbol{\boldsymbol{\zeta}}_2)-I(\boldsymbol{\theta}\ | \ \boldsymbol{\boldsymbol{\zeta}}_1)]\boldsymbol{a}(\boldsymbol{\theta}) =\int_0^{T}
\left[a(\boldsymbol{\theta})^T
\nabla_{\boldsymbol{\theta}}
\ln h(x\mid\boldsymbol\theta)
\right]^2
 f_{(l+1):m:n}(x\mid\boldsymbol{\theta})\,dx.
\end{align*}
This shows that $I(\boldsymbol{\theta}\ | \ \boldsymbol{\boldsymbol{\zeta}}_2)-I(\boldsymbol{\theta}\ | \ \boldsymbol{\boldsymbol{\zeta}}_1)$ is a non-negative definite matrix. This implies that $I^{-1}(\boldsymbol{\theta}\ | \ \boldsymbol{\boldsymbol{\zeta}}_1)-I^{-1}(\boldsymbol{\theta}\ | \ \boldsymbol{\boldsymbol{\zeta}}_2)$ is also non-negative definite matrix. Therefore, we have
\begin{align*}
    &\text{trace}\left[I^{-1}(\boldsymbol{\theta}\ | \ \boldsymbol{\boldsymbol{\zeta}}_1)-I^{-1}(\boldsymbol{\theta}\ | \ \boldsymbol{\boldsymbol{\zeta}}_2)\right]\geq 0.
\end{align*}
This completes the proof of part (I).

\noindent\textbf{Proof of part (II):} Let  $\boldsymbol{\boldsymbol{\zeta}}_1=(\boldsymbol{R},l,T_1)$ and $\boldsymbol{\boldsymbol{\zeta}}_2=(\boldsymbol{R},l,T_2)$, where $T_1<T_2$.  Then it is enough to prove that $\phi(\boldsymbol{\boldsymbol{\zeta}}_1)\leq \phi(\boldsymbol{\boldsymbol{\zeta}}_2)$.
\begin{align*}
   & I(\boldsymbol{\theta}\ | \ \boldsymbol{\boldsymbol{\zeta}}_1)-I(\boldsymbol{\theta}\ |\ \boldsymbol{\boldsymbol{\zeta}_2})=\int_{T_1}^{T_2}
\left[
\nabla_{\boldsymbol{\theta}}
\ln h(x\mid\boldsymbol\theta)
\right]\left[
\nabla_{\boldsymbol{\theta}}
\ln h(x\mid\boldsymbol\theta)
\right]^T
\sum_{i=l+1}^m f_{i:m:n}(x\mid\boldsymbol{\theta})\,dx.
\end{align*}
Now, for any  $\boldsymbol{a}(\boldsymbol{\theta})\neq 0$, we have
\begin{align*}
   & \boldsymbol{a}(\boldsymbol{\theta})^T[I(\boldsymbol{\theta}\ | \ \boldsymbol{\boldsymbol{\zeta}}_2)-I(\boldsymbol{\theta}\ | \ \boldsymbol{\boldsymbol{\zeta}}_1)]\boldsymbol{a}(\boldsymbol{\theta})=\int_{T_1}^{T_2}
\left[a(\boldsymbol{\theta})^T
\nabla_{\boldsymbol{\theta}}
\ln h(x\mid\boldsymbol\theta)
\right]^2
\sum_{i=l+1}^m f_{i:m:n}(x\mid\boldsymbol{\theta})\,dx.
\end{align*}
This shows that $I(\boldsymbol{\theta}\ | \ \boldsymbol{\boldsymbol{\zeta}}_2)-I(\boldsymbol{\theta}\ | \ \boldsymbol{\boldsymbol{\zeta}}_1)$ is a non-negative definite matrix. This implies that $I^{-1}(\boldsymbol{\theta}\ | \ \boldsymbol{\boldsymbol{\zeta}}_1)-I^{-1}(\boldsymbol{\theta}\ | \ \boldsymbol{\boldsymbol{\zeta}}_2)$ is also non-negative definite matrix. Therefore, we have
\begin{align*}
    &\text{trace}\left[I^{-1}(\boldsymbol{\theta}\ | \ \boldsymbol{\boldsymbol{\zeta}}_1)-I^{-1}(\boldsymbol{\theta}\ | \ \boldsymbol{\boldsymbol{\zeta}}_2)\right]\geq 0
\end{align*}
This completes the proof of part (II).
\end{proof}
Theorems~\ref{th3} and \ref{th1} provide the theoretical basis for developing an efficient optimization algorithm. For a fixed pair $(\boldsymbol{R},l)$, Theorem~\ref{th1} shows that the A-optimality criterion is decreasing in the censoring time $T$, whereas Theorem~\ref{th3} establishes that the total expected cost is increasing in $T$. Consequently, whenever the budget constraint is active, the optimal censoring time is attained at the boundary of the feasible region and can be obtained by solving \(TC(\boldsymbol{\zeta})=C_B.\)

The monotonicity of the total expected cost also provides an effective feasibility test. Since $T=0$ and $T=\infty$ cannot be used directly in numerical computation, we replace them by sufficiently small and sufficiently large finite values, denoted by $T_{\min}$ and $T_{\max}$, respectively. For a fixed progressive censoring vector $\boldsymbol{R}$, if
\(
TC(\boldsymbol{R},T_{\min},0)>C_B,
\)
then the budget constraint is violated even at the smallest admissible censoring time. Hence, no feasible pair $(T,l)$ exists for that particular $\boldsymbol{R}$, and the corresponding neighborhood is discarded from the VNS search.

Furthermore, for a fixed $\boldsymbol{R}$, if
\(
TC(\boldsymbol{R},T_{\min},l)>C_B\)
for some value of $l$, then, by Theorem~\ref{th3}, all subsequent values $l+1,\ldots,m-1$ are also infeasible. Therefore, the search over $l$ can be terminated immediately. On the other hand, if
\(
TC(\boldsymbol{R},T_{\max},l)\le C_B,
\)
then the budget constraint remains satisfied throughout the search interval. Since the A-optimality criterion decreases with $T$, the optimal censoring time is simply given by
\(
T^*=T_{\max}.
\)

Therefore, for each progressive censoring vector, only a one-dimensional search over $l$ is required. The resulting optimization problem is thus reduced to determining the optimal progressive censoring vector, which is efficiently accomplished using the Variable Neighborhood Search (VNS) algorithm described in Algorithm~\ref{algo1}.
\begin{algorithm}[hbt!]
\small\caption{VNS Algorithm for Constrained A-optimal Design}
\label{algo1}

Initialize the feasible progressive censoring vector \(
\boldsymbol{R}=(0,0,\ldots,0,n-m).
\)

\For{each feasible progressive censoring vector $\boldsymbol{R}$}{
    Set $l=0$\;
    \While{$l\le m-1$}{
        \If{$TC(\boldsymbol{R},T_{\min},l)>C_B$}{
            \textbf{break}
        }
        \eIf{$TC(\boldsymbol{R},T_{\max},l)\le C_B$}{
            Set \(T^*(\boldsymbol{R},l)=T_{\max}.\)
        }
        {
        Determine
\(T^*(\boldsymbol{R},l)\) by solving \(TC(\boldsymbol{R},T,l)=C_B.\)
        }
        Compute \(\phi(\boldsymbol{R},T^*,l).\)

        Store the objective value\;

        Set $l=l+1$\;

    }

    Determine
    \(l^*(\boldsymbol{R})=\arg\min_l\phi(\boldsymbol{R},T^*,l).\)
    
    Define
    \(
    \phi_1(\boldsymbol{R})=\phi(\boldsymbol{R},T^*,l^*(\boldsymbol{R})).\)

}

Apply the VNS algorithm to minimize
\(\phi_1(\boldsymbol{R})\)
and obtain the optimal
\(
\boldsymbol{R}^*.
\)

Obtain
\(
l^*=l^*(\boldsymbol{R}^*),~
T^*=T^*(\boldsymbol{R}^*,l^*).
\)

\Return{$(\boldsymbol{R}^*,T^*,l^*)$.}

\end{algorithm}

Unlike the A-optimality criterion, the Shannon entropy does not satisfy any monotonicity property with respect to the censoring time $T$ or the guaranteed minimum number of failures $l$. Consequently, the theoretical properties established for the A-optimality criterion cannot be exploited to eliminate candidate solutions directly. Nevertheless, the monotonicity of the total expected cost established in Theorem~\ref{th3} remains valid and can still be utilized to reduce the feasible search region. Specifically, for a fixed progressive censoring vector $\boldsymbol{R}$, if
\(
TC(\boldsymbol{R},T_{\min},0)>C_B,\)
then no feasible pair $(T,l)$ exists for that particular $\boldsymbol{R}$, and the corresponding neighborhood is discarded. Similarly, if
\(
TC(\boldsymbol{R},T_{\min},l)>C_B
\)
for some value of $l$, then all subsequent values of $l$ are also infeasible because the total expected cost is increasing in $l$. Hence, the search over $l$ can be terminated immediately. Furthermore, if
\(
TC(\boldsymbol{R},T_{\max},l)\le C_B,
\)
the entire interval $[T_{\min},T_{\max}]$ is feasible, and only this finite interval needs to be explored. Also, For a fixed progressive censoring vector $\boldsymbol{R}$ and a fixed value of $l$, let $T_B$ denote the unique solution of
\(
TC(\boldsymbol{R},T,l)=C_B.
\)
Since the total expected cost is increasing in $T$, every feasible censoring time satisfies
\(
T\in[T_{\min},\,T_B].
\)
Therefore, instead of searching over the entire interval $[T_{\min},T_{\max}]$, it is sufficient to search over the reduced interval $[T_{\min},T_B]$.

Therefore, although the entropy criterion itself does not possess monotonicity, the monotonicity of the total expected cost substantially reduces the feasible search space. The resulting optimization procedure is summarized in Algorithm~\ref{algo2}.

\begin{algorithm}[H]
\small\caption{VNS Algorithm for Constrained Entropy-optimal Design}
\label{algo2}

Initialize the feasible progressive censoring vector
\(
\boldsymbol{R}=(0,0,\ldots,0,n-m).
\)

\For{each feasible progressive censoring vector $\boldsymbol{R}$}{

    Set $l=0$\;

    \While{$l\le m-1$}{

        \If{$TC(\boldsymbol{R},T_{\min},l)>C_B$}{
\textbf{break}
        }
Determine $T_B$ by solving
        \(TC(\boldsymbol{R},T,l)=C_B.
        \)

        \For{$T\in[T_{\min},T_B]$}{
Compute \(\phi(\boldsymbol{R},T,l).\)
        }

        Determine
        \(T^*(\boldsymbol{R},l)=\arg\min_{T\in[T_{\min},T_B]}
        \phi(\boldsymbol{R},T,l).
        \)
        
        Store
        \(
        \phi(\boldsymbol{R},T^*,l).\)
        
        Set $l=l+1$\;
    }
    Determine
    \(l^*(\boldsymbol{R})=\arg\min_l\phi_E(\boldsymbol{R},T^*,.
    \)
   
    Define
    \(\phi_1(\boldsymbol{R})= \phi_E(\boldsymbol{R},T^*(\boldsymbol{R},l^*),l^*(\boldsymbol{R})).
    \)

}
Apply the VNS algorithm to minimize
\(
\phi_1(\boldsymbol{R})
\)
and obtain the optimal
\(
\boldsymbol{R}^*.
\)

Obtain
\(
l^*=l^*(\boldsymbol{R}^*),~
T^*=T^*(\boldsymbol{R}^*,l^*).
\)

\Return{$(\boldsymbol{R}^*,T^*,l^*)$.}

\end{algorithm}

\section{Multi-objective Design}\label{multi-obj}
The results obtained in the section \ref{proce} show that improving the statistical efficiency of a life-testing experiment generally requires a longer test duration and a larger number of observed failures, thereby increasing the total experimental cost. Consequently, the experimenter must balance two conflicting objectives: minimizing the A-optimality criterion, which represents the estimation uncertainty, and minimizing the total cost of the experiment. This naturally leads to a multi-objective optimization problem.

In this section, we consider two approaches for obtaining compromise designs. The first is the well-established compound optimality criterion, which combines the normalized objectives through a weighted sum. The second is a minimax criterion, which minimizes the largest normalized objective. The latter avoids the need to specify preference weights and naturally exploits the monotonicity properties established in the previous section, leading to a more computationally efficient optimization procedure. Moreover, the minimax formulation can be readily extended to problems involving more than two competing objectives.  

\subsection{Compound Design}
Let $\phi_1(\boldsymbol{\zeta})$ denote the A-optimality criterion and let $\phi_2(\boldsymbol{\zeta})$ denote the total experimental cost under GPHCS-I/ Therefore, $\phi_1(\boldsymbol{\zeta})=TC(\boldsymbol{\zeta})$ and $\phi_2(\boldsymbol{\zeta})=\phi(\boldsymbol{\zeta})$. Let $\boldsymbol{\zeta}_1^*=(R_1^*,T_1^*,l_1^*)$ and $\boldsymbol{\zeta}_2^*=(\boldsymbol{R}_2^*,T_2^*,l_2^*)$ be optimal designs of $\phi_1(\boldsymbol{\zeta})$ and $\phi_2(\boldsymbol{\zeta})$, respectively. The compound criterion is defined as
\[
\psi(\boldsymbol{\zeta} \mid k)=\lambda \bar\phi_1(\boldsymbol{\zeta})+(1-\lambda)\bar\phi_2(\boldsymbol{\zeta}),
\]
where
\[
\bar\phi_i(\boldsymbol{\zeta})=\frac{\phi_i(\boldsymbol{\zeta})}{\phi_i(\boldsymbol{\zeta}_i^*)}, \qquad i=1,2,
\]
represent the relative scaling associated with the respective criteria. Since both $\phi_1(\boldsymbol{\zeta})$ and $\phi_2(\boldsymbol{\zeta})$ are positive, then construction, the relative scaling satisfy
\(\bar\phi_i(\boldsymbol{\zeta}) \geq 1.
\). Thus, the reciprocal
\[
\frac{1}{\bar{\phi}_i(\boldsymbol{\zeta})}
=
\frac{\phi_i(\boldsymbol{\zeta}_i^*)}
{\phi_i(\boldsymbol{\zeta})},
\]
lies in the interval $[0,1]$ and represents the efficiency of the $i$th objective relative to its individual optimum. An efficiency value closer to one indicates that the corresponding design performs closer to the ideal solution for that objective.

A detailed discussion regarding the theoretical properties and applications of the compound optimal design framework may be found in \citet{bhattacharya2020implementation}. We now have two competing criteria and must choose an experiment that simultaneously minimizes the trace of the inverse Fisher information matrix and the total cost. Now, for every value of $0\leq k\leq 1$, we minimize the compound criteria $\psi(\boldsymbol{\zeta}\mid \lambda)$. Let for a fixed value of \(\lambda\),
\(\boldsymbol{\zeta}_\lambda=\arg\min\limits_{\boldsymbol{\zeta}}
\psi(\boldsymbol{\zeta}\mid \lambda)
\)
denote the design that minimizes the compound criterion $\boldsymbol{\psi}(\boldsymbol{\zeta}\mid \lambda)$.
The procedure of finding $\boldsymbol{\zeta}_\lambda$ is discussed using Algorithm \ref{algo}.

\begin{algorithm}[hbt!]
\caption{VNS Algorithm for Determining the Compound Design}
\label{algo}
Choose a value of the preference parameter $\lambda\in[0,1]$\;
Initialize the feasible progressive censoring vector
\(
\boldsymbol{R}=(0,0,\ldots,0,n-m).
\)

\For{each feasible progressive censoring vector $\boldsymbol{R}$}{

    \For{$l=0,1,\ldots,m-1$}{

        \For{$T\in[T_{\min},T_{\max}]$}{

            Compute
            \(
            \psi(\boldsymbol{R},T,l) = \lambda\bar{\phi}_1(\boldsymbol{R},T,l)+ (1-\lambda)\bar{\phi}_2(\boldsymbol{R},T,l).
            \)

        }

        Determine
        \(
        T^*(\boldsymbol{R},l)=\arg\min_T\psi(\boldsymbol{R},T,l).\)

        Store
        \(\psi(\boldsymbol{R},T^*,l).\)

    }

    Determine
    \(l^*(\boldsymbol{R})= \arg\min_l \psi(\boldsymbol{R},T^*,l).\)

    Define
    \(
    \psi_1(\boldsymbol{R})= \psi(\boldsymbol{R},T^*(\boldsymbol{R},l^*), l^*(\boldsymbol{R})).
    \)
}

Apply the VNS algorithm to minimize
\(
\psi_1(\boldsymbol{R})\)
and obtain the optimal \(
\boldsymbol{R}^*.\)

Return
\((\boldsymbol{R}^*,T^*,l^*).
\)

\end{algorithm}

\begin{theorem}
The designing parameter $\boldsymbol{\zeta}_\lambda$ satisfies the following property
\begin{enumerate}
    \item the corresponding optimal value
\(
\phi_1(\boldsymbol{\zeta}_\lambda)
\)
is increasing as \(\lambda\) increases.
\item the corresponding optimal value
\(
\phi_2(\boldsymbol{\zeta}_\lambda)
\)
is decreasing as \(\lambda\) increases.
\end{enumerate}
\end{theorem}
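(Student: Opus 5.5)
The plan is the classical exchange (or ``swapping'') argument for minimizers of a one-parameter family of weighted sums. No structural property of the Fisher information or the cost function is needed. Because $\phi_i(\boldsymbol{\zeta}_i^*)>0$ is a constant, $\bar\phi_i$ is a positive rescaling of $\phi_i$, so any monotonicity of $\bar\phi_i(\boldsymbol{\zeta}_\lambda)$ in $\lambda$ transfers directly to $\phi_i(\boldsymbol{\zeta}_\lambda)$. I will assume that a minimizer $\boldsymbol{\zeta}_\lambda$ exists for every $\lambda\in[0,1]$; this holds, for instance, if $T$ is restricted to $[T_{\min},T_{\max}]$ and both criteria are continuous in $T$, since $\boldsymbol{R}$ and $l$ range over finite sets. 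If the minimizer is not unique, any selection works, because the argument below applies to arbitrary minimizers.

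Fix $0\le\lambda<\mu\le 1$ and write $a_i=\bar\phi_i(\boldsymbol{\zeta}_\lambda)$ and $b_i=\bar\phi_i(\boldsymbol{\zeta}_\mu)$. Optimality of $\boldsymbol{\zeta}_\lambda$ for $\psi(\cdot\mid\lambda)$ and of $\boldsymbol{\zeta}_\mu$ for $\psi(\cdot\mid\mu)$ gives
\[
\lambda a_1+(1-\lambda)a_2\le \lambda b_1+(1-\lambda)b_2,\qquad
\mu b_1+(1-\mu)b_2\le \mu a_1+(1-\mu)a_2 .
\]
Adding the two inequalities and cancelling terms yields $(\mu-\lambda)\bigl[(b_1-a_1)-(b_2-a_2)\bigr]\le 0$. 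Since $\mu>\lambda$, this means $b_1-a_1\le b_2-a_2$. Substituting back into the first inequality, rewritten as $\lambda(a_1-b_1)\le(1-\lambda)(b_2-a_2)$, and combining the two relations gives $b_1\le a_1$ and $b_2\ge a_2$. The one delicate point is the endpoints $\lambda=0$ or $\mu=1$, where one of the inequalities degenerates. There I would handle the cases separately, using that $\boldsymbol{\zeta}_0$ minimizes $\bar\phi_2$ and $\boldsymbol{\zeta}_1$ minimizes $\bar\phi_1$.

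This argument shows that the criterion carrying the weight $\lambda$ has a nonincreasing optimal value along $\boldsymbol{\zeta}_\lambda$, while the criterion carrying $1-\lambda$ has a nondecreasing one. The main obstacle is therefore not analytic but one of bookkeeping: the statement must be matched to the labelling of $\phi_1,\phi_2$ and of the weight in $\psi$. The text is inconsistent here. It first calls $\phi_1$ the A-optimality criterion, then sets $\phi_1=TC$, and it writes both $\psi(\boldsymbol{\zeta}\mid k)$ and $\psi(\boldsymbol{\zeta}\mid\lambda)$. The claimed directions, namely $\phi_1(\boldsymbol{\zeta}_\lambda)$ increasing and $\phi_2(\boldsymbol{\zeta}_\lambda)$ decreasing, follow exactly when $\lambda$ is read as the weight on $\bar\phi_2$, that is, $\psi=(1-\lambda)\bar\phi_1+\lambda\bar\phi_2$. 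I will therefore fix that convention at the start of the proof, with the two inequalities above interchanged accordingly. Under the displayed convention $\psi=\lambda\bar\phi_1+(1-\lambda)\bar\phi_2$, the same computation proves the mirror statement, so the conclusion is the same trade-off up to relabelling.

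Finally, I will remark that the monotonicity is only weak, since ties can occur on the discrete components $(\boldsymbol{R},l)$. Strict monotonicity would require uniqueness of the minimizers.
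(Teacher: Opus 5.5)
Your proposal is correct. The paper gives no argument of its own here; it only defers to the cited compound-design reference. Your two-inequality exchange argument is the standard proof of this kind of result, so you are supplying the omitted proof rather than taking a different route.

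Your diagnosis of the orientation is right, and it is the most valuable part of your write-up. With $\psi(\boldsymbol{\zeta}\mid\lambda)=\lambda\bar\phi_1(\boldsymbol{\zeta})+(1-\lambda)\bar\phi_2(\boldsymbol{\zeta})$ as displayed, $\boldsymbol{\zeta}_1$ minimizes $\bar\phi_1$. Hence $\phi_1(\boldsymbol{\zeta}_\lambda)\ge\phi_1(\boldsymbol{\zeta}_1)$ for every $\lambda$, and $\phi_1(\boldsymbol{\zeta}_\lambda)$ can be nondecreasing on $[0,1]$ only if it is constant. So unless the two criteria share a minimizer, the printed statement is false under the printed convention. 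It holds in either of two readings:
\begin{itemize}
\item after your swap of the weights;
\item if it is read in terms of the efficiencies $1/\bar\phi_i(\boldsymbol{\zeta}_\lambda)$. These are exactly what the paper plots right after the theorem, and they move in the stated directions under the displayed $\psi$.
\end{itemize}
You may want to mention the second reconciliation, since it keeps the paper's formula for $\psi$ intact.

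There are two small points of execution, stated here for the displayed convention, with $x=b_1-a_1$ and $y=b_2-a_2$. First, after deriving $x\le y$ you return only to the first optimality inequality, $\lambda x+(1-\lambda)y\ge 0$. That gives $y\ge 0$ but not $x\le 0$: the pair $x=y=1$ satisfies both relations. You must also use the second inequality, $\mu x+(1-\mu)y\le 0$.

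Second, doing so removes the need for separate endpoint cases:
\begin{itemize}
\item If $x>0$, then $y>0$, so $\mu x+(1-\mu)y>0$ because $\mu>\lambda\ge 0$. This contradicts the optimality of $\boldsymbol{\zeta}_\mu$.
\item If $y<0$, then $x<0$, so $\lambda x+(1-\lambda)y<0$ because $1-\lambda>1-\mu\ge 0$. This contradicts the optimality of $\boldsymbol{\zeta}_\lambda$.
\end{itemize}
This covers $\lambda=0$ and $\mu=1$ uniformly.
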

\begin{proof}
    The proof can be followed from \cite{dhameliya2024compound}.
\end{proof}
From the Theorem, we can say that if we plot \(
1/\phi_1(\boldsymbol{\zeta}_\lambda)
\) and \(
1/\phi_2(\boldsymbol{\zeta}_\lambda)
\) for different values of $\lambda$ between 0 and 1, there can exist a point $\lambda^*$ where $
\phi_1(\boldsymbol{\zeta}_{\lambda^*})=
\phi_1(\boldsymbol{\zeta}_{\lambda^*})$. Now using the value of $\lambda^*$ and Algorithm \ref{algo}, we get the optimal compound design $\boldsymbol{\zeta}^*$ by minimizing $\psi(\boldsymbol{\zeta}\mid \lambda^*)$.

Although the compound criterion provides an effective compromise between statistical efficiency and experimental cost, its implementation requires repeated optimization for different values of $\lambda$. The final design is then selected by analysing the trade-off curve to determine an appropriate balancing weight. Furthermore, extending this procedure beyond two objectives becomes increasingly difficult because graphical determination of the preference weights is no longer straightforward.
\subsection{Minimax criterion}
\label{sec:minimax}

To overcome these limitations, we propose a minimax optimization framework. Rather than specifying preference weights, the minimax criterion minimizes the largest normalized deviation from the individual optimum. Consequently, it produces a balanced design without requiring any prior preference information.  The minimax compromise design is defined by
\begin{equation}
(\mathbf{R}^*,T^*,l^*)
=
\arg\min_{\mathbf{R},T,l}
\left\{
\max
\left(
\phi_1(\mathbf{R},T,l),
\phi_2(\mathbf{R},T,l)
\right)
\right\},
\label{eq:minimax}
\end{equation}
To incorporate the monotonicity properties established in Section~\ref{proce} into the multi-objective optimization framework, we first establish the following lemma and theorem. These results considerably reduce the feasible search space and improve the computational efficiency of the proposed optimization Algorithm \ref{algo:minimax} to determine the optimal designing parameter $\boldsymbol{\zeta}^*$.
\begin{lemma}
\label{lem:mono}

For fixed inspection level $l$ and progressive replacement vector
$\mathbf{R}$, define

\[
g(T)
=
\phi_1(\mathbf{R},T,l)-\phi_2(\mathbf{R},T,l).
\]
Using Theorems \ref{th3} and \ref{th1}, we can say that $g(T)$ is strictly increasing over $T$.

\end{lemma}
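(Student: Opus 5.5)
The plan is to write $g$ as the sum of two monotone pieces. In the compound subsection the paper sets $\phi_1(\boldsymbol{\zeta})=TC(\boldsymbol{\zeta})$ and $\phi_2(\boldsymbol{\zeta})=\phi(\boldsymbol{\zeta})$, the trace of the inverse Fisher information. With $(\mathbf{R},l)$ held fixed we then have
\[
g(T)=TC(\mathbf{R},T,l)+\bigl[-\phi(\mathbf{R},T,l)\bigr].
\]
Part I of Theorem~\ref{th3}, read for the cost as Propositions~\ref{r1}--\ref{r3} intend, says $TC$ is increasing in $T$. Part I of Theorem~\ref{th1} says $\phi$ is decreasing in $T$. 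So $-\phi$ is increasing, and a sum of two increasing functions is increasing, which gives $g$ non-decreasing at once.

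The detailed order of steps is as follows.

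\emph{Cost term.} Start from $TC=C_0+C_\tau E[\tau]+C_D E[D]$ with $C_\tau,C_D\ge 0$. Differentiate:
\[
\frac{d}{dT}E[\tau]=\Pr(X_{m:m:n}>T)-\Pr(X_{l:m:n}>T),
\qquad
\frac{d}{dT}E[D]=\sum_{d=l}^{m} f_{d:m:n}(T\mid\boldsymbol{\theta}).
\]
The second formula comes from the representation $E[D]=l+\sum_{d}F_{d:m:n}(T)$ obtained in the proof of Proposition~\ref{r3}.

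\emph{Information term.} Differentiating the Fisher information expression gives
\[
\frac{d}{dT}I=\bigl[\nabla\ln h(T)\bigr]\bigl[\nabla\ln h(T)\bigr]^{T}\sum_{i=l+1}^{m} f_{i:m:n}(T),
\]
which is non-negative definite. Then use
\[
\frac{d}{dT}\operatorname{tr}(I^{-1})=-\operatorname{tr}\Bigl(I^{-1}\,\frac{dI}{dT}\,I^{-1}\Bigr)\le 0 .
\]
This gives the sign of $-\phi'(T)$ directly and does not rely on the comparison argument in Theorem~\ref{th1}.

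\emph{Combining.} Adding the two pieces gives $g'(T)\ge 0$.

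The main obstacle is \emph{strict} monotonicity, since the earlier theorems only give weak inequalities. To get strictness I would show that $g'(T)>0$ for every $T>0$. For the cost term, I would use $C_D>0$ together with the fact that $f_{m:m:n}(T\mid\boldsymbol{\theta})>0$ on the support, for lifetime densities positive on $(0,\infty)$ such as the Weibull. This makes $\frac{d}{dT}E[D]>0$, because every summand is nonnegative and the $d=m$ term is strictly positive. The duration term is nonnegative by Proposition~\ref{r1}. This makes the cost strictly increasing without any strictness needed from the information part.

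As a fallback, the information term can also supply strictness. The matrix $I^{-1}\,\frac{dI}{dT}\,I^{-1}$ is a nonzero positive semidefinite matrix whenever $\nabla\ln h(T)\ne 0$ and $\sum_{i>l} f_{i:m:n}(T)>0$. A nonzero positive semidefinite matrix has strictly positive trace, so this also gives strict decrease of $\phi$. In either route I would state the needed nondegeneracy assumptions explicitly: $C_D>0$ and a positive density, or $\nabla_{\boldsymbol\theta}\ln h(T\mid\boldsymbol{\theta})\neq 0$.
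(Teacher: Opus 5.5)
Your proof is the paper's own argument. The paper justifies the lemma only by the one-line remark that $g(T)=TC(\mathbf{R},T,l)-\phi(\mathbf{R},T,l)$ is an increasing function of $T$ (Theorem~\ref{th3}) minus a decreasing one (Theorem~\ref{th1}), and your derivative computations reproduce that remark. Your nondegeneracy step ($C_D>0$ with $f_{m:m:n}(T\mid\boldsymbol{\theta})>0$, or alternatively $\nabla_{\boldsymbol{\theta}}\ln h(T\mid\boldsymbol{\theta})\neq 0$) supplies the \emph{strict} monotonicity that the paper asserts but that the cited theorems, which give only weak inequalities, do not deliver.

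One small correction: telescoping the case decomposition gives $E[D]=l+\sum_{d=l+1}^{m}F_{d:m:n}(T\mid\boldsymbol{\theta})$, not a sum starting at $d=l$ (the lower limit in the proof of Proposition~\ref{r3} is a slip). Hence $\frac{d}{dT}E[D]=\sum_{d=l+1}^{m}f_{d:m:n}(T\mid\boldsymbol{\theta})$. This still contains the strictly positive $d=m$ term, so your conclusion is unaffected.
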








\begin{theorem}
\label{thm:minimax}

For fixed inspection level $l$ and progressive replacement vector
$\mathbf{R}$, consider

\[
\Phi(\boldsymbol{R},T,l)
=
\max
\left\{
\phi_1(\mathbf{R},T,l),
\phi_2(\mathbf{R},T,l)
\right\}.
\]
Then,
\begin{enumerate}

\item
If there exists a feasible point $T^*$ satisfying
\(\phi_1(\mathbf{R},T^*,l)=\phi_2(\mathbf{R},T^*,l),
\)
then
\(
T^*=\arg\min\limits_T\phi(\mathbf{R},T^*,l).
\)

\item
If \(\phi_1(\mathbf{R},T,l)<\phi_2(\mathbf{R},T,l),\) for all $T$ then \(T^*\rightarrow\infty.\)
\item
If
\(\phi_1(\mathbf{R},T,l)>\phi_2(\mathbf{R},T,l),\) for all $T$ then
\(T^*\rightarrow0.\)

\end{enumerate}

\end{theorem}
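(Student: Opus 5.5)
We treat Theorem~\ref{thm:minimax} as a one-dimensional problem in $T$ with $(\mathbf{R},l)$ fixed. We read $\phi_1$ as the total cost $TC$ and $\phi_2$ as the A-optimality criterion; the compound subsection defines them in both orders, but this is the order for which Lemma~\ref{lem:mono} is consistent. With this reading, Theorem~\ref{th3} says $\phi_1$ is increasing in $T$ and Theorem~\ref{th1} says $\phi_2$ is decreasing in $T$. Lemma~\ref{lem:mono} then gives that $g(T)=\phi_1-\phi_2$ is strictly increasing. The whole argument rests on these two opposite monotonicities, together with the identity $\Phi=\max\{\phi_1,\phi_2\}$, which equals $\phi_1$ where $g\ge 0$ and $\phi_2$ where $g\le 0$. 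We read the conclusion of part 1 as $T^*=\arg\min_T \Phi(\mathbf{R},T,l)$.

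\textbf{Part 1.} Since $g$ is strictly increasing, the crossing point $T^*$ with $g(T^*)=0$ is unique. Moreover, $g(T)<0$ for $T<T^*$ and $g(T)>0$ for $T>T^*$. Hence $\Phi(T)=\phi_2(T)$ on $T\le T^*$, and this is nonincreasing in $T$. Likewise $\Phi(T)=\phi_1(T)$ on $T\ge T^*$, and this is nondecreasing in $T$. So for $T<T^*$ we get $\Phi(T)=\phi_2(T)\ge\phi_2(T^*)=\Phi(T^*)$. For $T>T^*$ we get $\Phi(T)=\phi_1(T)\ge \phi_1(T^*)=\Phi(T^*)$. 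Therefore $T^*$ minimizes $\Phi$. It is the unique minimizer provided one of the monotonicities is strict; strictness is supplied by the strict increase of $g$ in Lemma~\ref{lem:mono}.

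\textbf{Parts 2 and 3.} If $\phi_1<\phi_2$ for all $T$, then $\Phi=\phi_2$ everywhere. Since $\phi_2$ is decreasing in $T$, the infimum is approached as $T$ grows. In the numerical setting the minimizer is the right endpoint $T_{\max}$, which is the formal meaning of $T^*\to\infty$. Symmetrically, if $\phi_1>\phi_2$ for all $T$, then $\Phi=\phi_1$, which is increasing, and the infimum is attained as $T\downarrow 0$, i.e.\ at $T_{\min}$.

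\textbf{Main obstacle.} The substantive issue is justifying strictness and interpreting the limits, not the case analysis itself. Theorems~\ref{th3} and~\ref{th1} only give weak monotonicity: the integrands $f_{i:m:n}$ and $[\nabla\ln h]^2$ are positive, so strictness holds generically, but this is not proved there. Without strictness, part 1 still yields that $T^*$ is \emph{a} minimizer. This is enough for the algorithm, so we would state uniqueness only under the strict version of Lemma~\ref{lem:mono}. For parts 2 and 3, the infimum may not be attained on $(0,\infty)$. We would therefore phrase the conclusion as $\Phi(\mathbf{R},T,l)\downarrow \inf_T\Phi$ along $T\to\infty$, respectively $T\to 0$, which on the admissible interval $[T_{\min},T_{\max}]$ becomes attainment at the corresponding endpoint.
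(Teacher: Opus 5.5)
Your proposal is correct and follows essentially the paper's own argument: the crossing point of the strictly increasing $g$ splits the $T$-axis, with $\Phi=\phi_2$ (decreasing) to its left and $\Phi=\phi_1$ (increasing) to its right, and in parts 2--3 the monotonicity of the single active objective drives $T^*$ to the boundary ($T_{\max}$ or $T_{\min}$ numerically). One small correction to your uniqueness remark: strict increase of $g$ does not by itself make $T^*$ the unique minimizer. For example, if $\phi_2$ is flat on $T<T^*$ while $\phi_1$ is strictly increasing, then $g$ is still strictly increasing but every $T\le T^*$ minimizes $\Phi$; uniqueness needs $\phi_2$ strictly decreasing left of $T^*$ and $\phi_1$ strictly increasing right of it. That is the same unstated assumption behind the paper's strict inequalities $\phi_1(T)<c<\phi_2(T)$ for $T<T^*$ (reversed for $T>T^*$), but since the statement only asserts that $T^*$ is a minimizer, your weak-inequality argument already suffices.
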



\begin{proof}
By Lemma~\ref{lem:mono}, $g(T)$ is strictly increasing.
Suppose there exists $T^*$ satisfying \(
g(T^*)=0.
\) Then
\[
\phi_1(l,T^*,\mathbf{R})=\phi_2(l,T^*,\mathbf{R})=c.
\]
For every $T<T^*$, \(\phi_1(\boldsymbol{R},T,l)<c,~\phi_2(\mathbf{R},T,l)>c.\) Hence,

\[
\Phi(\mathbf{R},T,l)
=
\phi_2(l,T,\mathbf{R})
>
c.
\]
Similarly, for every
$T>T^*$,
\(\phi_1(\mathbf{R},T,l)>c,~\phi_2(\mathbf{R},T,l)<c,
\)
which gives
\[
\Phi(\mathbf{R},T,l)
=
\phi_1(l,T,\mathbf{R})
>
c.
\]
Therefore,
\(
\Phi(\mathbf{R},T,l)
\ge
\Phi(\mathbf{R},T^*,l)
\) for every feasible $T$ and fixed $l$ and $R$, proving part (1).

If \(g(T)<0\) throughout the feasible interval, then \(
\phi(\mathbf{R},T,l)=
\phi_2(\mathbf{R},T,l),\) which is decreasing in $T$. Hence, \(
T^*\rightarrow\infty.,\) proving part (2).

Similarly, if \(g(T)>0\) throughout the feasible interval, then
\(\phi(\mathbf{R},T,l)=\phi_1(\mathbf{R},T,l),\) which is increasing in $T$. Hence, \(
T^*\rightarrow0,\) , proving part (3).
\end{proof}







\begin{algorithm}[hbt!]
\caption{Determination of the Minimax Compromise Design}\label{algo:minimax}
Initialize the feasible progressive replacement vector
\(
\boldsymbol{R}^{(0)}.
\)
\For{a fixed progressive replacement vector \(\boldsymbol{R}\)}{
Set \(l=0\)\;
\While{$l\le m-1$}{
Compute \(g(T)=\phi_1(\boldsymbol{R},T,l)-\phi_2(\boldsymbol{R},T,l).\)

\eIf{$g(T_{\min})\,g(T_{\max})\le0$}{
Determine \(T^*(l,\boldsymbol{R})\) by solving \(
\phi_1(\boldsymbol{R},T,l)=\phi_2(\boldsymbol{R},T,l)\) using the bisection method.
}{
\eIf{$g(T_{\min})>0$}{
Set
\(
T^*(l,\boldsymbol{R})=T_{\min}.\)

}{

Set

\(
T^*(l,\boldsymbol{R})=T_{\max}.
\)
}
}
Compute
\(\Phi(l,\boldsymbol{R})=\max\left\{\phi_1(\boldsymbol{R},l,T^*),\phi_2(\boldsymbol{R},l,T^*)\right\}.\)

Increase \(l\)\;
}
Determine
\(
l^*(\boldsymbol{R})
=\arg\min_l\Phi(l,\boldsymbol{R}).\)
Define
\(
\Phi_1(\boldsymbol{R})=\Phi\left(l^*(\boldsymbol{R}),\boldsymbol{R}\right).\)
}
Apply the VNS algorithm to minimize
\(
\Phi_1(\boldsymbol{R})
\) and obtain the optimal
\(
\boldsymbol{R}^*.
\)

Obtain \(l^*=l^*(\boldsymbol{R}^*)\)
and \(T^*=T^*(l^*,\boldsymbol{R}^*).\)

Return the optimal compromise solution \(
(\boldsymbol{R}^*,\,T^*,\,l^*).
\)

\end{algorithm}


\begin{remark}
The proposed algorithm exploits the monotonicity of the objective
functions with respect to the inspection interval $T$.
Consequently, the continuous optimization problem is reduced to either a one-dimensional root-finding problem or a boundary evaluation. $l$ is then searched over their feasible domains and $\boldsymbol{R}$ is searched using VNS algorithm. This substantially reduces the computational burden compared with an
exhaustive grid search over the continuous decision variable.
\end{remark}

\section{Numerical Analysis}\label{numerical}
In this section, we illustrate the proposed optimal design methodology for the Type-I generalized progressive hybrid censoring scheme (GPHCS-I) under several design criteria. We first determine the unconstrained optimal designs for the A-optimality, cost-minimization, and entropy-minimization criteria in Section \ref{uncons}. We then investigate the constrained A-optimal and entropy-optimal designs under a fixed budget in Section \ref{cons}. Finally, the multi-objective design, which simultaneously considers A-optimality and cost minimization, is presented in Section \ref{multi}. Throughout the numerical study, it is assumed that the lifetime of each test item follows a Weibull distribution with cumulative distribution function (CDF)
\[F(x\mid\boldsymbol{\theta})=1-\exp\left[-(\eta x)^{\alpha}\right],\]
which is denoted by \(\text{Wei}(\alpha,\eta)\), where \(\alpha>0\) and \(\eta>0\) are the shape and scale parameters, respectively.

\subsection{Unconstrained optimal design}\label{uncons}
For the A-optimal design, Result \ref{r3} shows that the optimal censoring time satisfies \(T^* \to \infty\) and the optimal guaranteed number of failures is \(l^*=m-1\).  In contrast, when the objective is to minimize the total experimental cost, Results \ref{r1} and \ref{r2} imply that the optimal design satisfies \(T^* \to 0\) and \(l^*=0\). Consequently, the theoretical minimum cost is the fixed setup cost \(C_0\), since no effective testing time or failure observation is required.  However, these theoretical optima are not computationally practical because \(T=0\) and \(T=\infty\) cannot be implemented numerically. Therefore, throughout the numerical study, the search space for the censoring time is restricted to \(T \in [T_{\min},T_{\max}]\), where \(T_{\min}\) and \(T_{\max}\) denote the 0.01th and 0.99th quantiles of the lifetime distribution, respectively. Under this restriction, the cost-minimization problem always attains its optimum at the lower bound \(T^*=T_{\min}\), whereas the A-optimal design always attains its optimum at the upper bound \(T^*=T_{\max}\). Based on these theoretical results, the optimal censoring scheme \(\boldsymbol{R}^*\) is obtained using the VNS algorithm for \(\text{Wei}(2,1)\), \(\text{Wei}(1,1)\), and \(\text{Wei}(0.5,1)\).  The corresponding A-optimal for different combinations of \(n\) and \(m\) are reported in Table \ref{uncontraint}. Also it is noted that in every cases, for cost minimization, the optimal censoring scheme \(\boldsymbol{R}^*=(n-m,0^{(m-1)}))\), which is not given in the table. $0^{(m-1)})$ means that 0 repated in $(m-1)$ times. Also, the optimal entropy minimization is given in Table \ref{uncontraint}. To determine the optimal designing parameter for the unconstrained entropy minimization, we can use Algorithm \ref{algo} by taking $\psi(\boldsymbol{R},T,l)=\mathcal{H}_{X_{l:m:n} \vee (X_{m:m:n} \wedge T)}$. The optimal objective value is denoted by $\phi_1$ in Table \ref{uncontraint} for unconstrained settings. For the Weibull distributions \(\text{Wei}(2,1)\), \(\text{Wei}(1,1)\), and \(\text{Wei}(0.5,1)\), the corresponding values of \(T_{\min}\) are \(0.0001\), \(0.0101\), and \(0.1002\), respectively, while the corresponding values of \(T_{\max}\) are \(21.2076\), \(4.6052\), and \(2.1460\), respectively.
\begin{table}[hbt!]
\centering
\caption{Unconstrained optimal design}
\begin{tabular}{ccc|ccc|ccc}
\hline
 &&  & \multicolumn{3}{c|}{A-optimal} & \multicolumn{3}{c}{Entropy}\\
\hline
$\alpha$&$n$ &$m$&$\boldsymbol{R}^*$&$(T^*,l^*)$&$\phi_1$&$\boldsymbol{R}^*$&$(T^*,l^*)$&$\phi_1$
\\
\hline
\multirow{6}{*}{0.5}&10&5&$(3,1^{(2)},0^{(2)})$&$(21.2076,4)$&0.8302&$(0^{(4)},5)$&$(0.556,3)$&-19.7741\\

&15&5&$(10,0^{(4)})$&$(21.2076,4)$&0.8270&$(0^{(4)},10)$&$(8.113,3)$&-29.0091\\
&20&5&$(15,0^{(4)})$&$(21.2076,4)$&0.8263&$(0^{(4)},15)$&$(12.849,3)$&-35.5443\\
&15&7&$(6,1,0^{(4)},1)$&$(21.2076,6)$&0.5903&$(0^{(6)},8)$&$(8.571,6)$&-40.1336\\
&15&10&$((1,0)^{(4)},0,1)$&$(21.2076,9)$&0.4156&$(0^{(9)},5)$&$(0.271,7)$&-53.3251\\
&15&12&$(0^{(11)},3)$&$(21.2076,11)$&0.3495&$(0^{(11)},3)$&$(0.264,7)$&-58.8179\\
\hline
\multirow{6}{*}{1}&10&5&$(0,5,0^{(3)})$&$(4.6052,4)$&0.2957&$(0^{(4)},5)$&$(4.601,4)$&-5.3169\\
&15&5&$(10,0^{(4)})$&$(4.6052,4)$&0.2823&$(0^{(4)},10)$&$(3.521,2)$&-7.7948\\
&20&5&$(15,0^{(4)})$&$(4.6052,4)$&0.2741&$(0^{(4)},15)$&$(2.850,0)$&-9.4363\\
&15&7&$(0,8,0^{(5)})$&$(4.6052,6)$&0.2078&$(0^{(6)},8)$&$(3.938,0)$&-10.2947\\
&15&10&$(0^{(3)},5,0^{(6)})$&$(4.6052,9)$&0.1544&$(0^{(9)},5)$&$(2.967,9)$&-13.1120\\
&15&12&$(0^{(4)},3,0^{(7)})$&$(4.6052,11)$&0.1344&$(0^{(11)},3)$&$(2.967,9)$&-14.1075\\
\hline
\multirow{6}{*}{2}&10&5&$(0,5,0^{(3)})$&$(2.1460,4)$&0.4269&$(2,0^{(3)},3)$&$(2.146,4)$&-2.8000\\
&15&5&$(0,10,0^{(3)})$&$(2.1460,4)$&0.3674&$(7,0^{(3)},3)$&$(2.146,4)$&-2.7351\\
&20&5&$(0,15,0^{(3)})$&$(2.1460,4)$&0.3321&$(12,0^{(3)},3)$&$(2.146,4)$&-2.8580\\
&15&7&$(0,8,0^{(5)})$&$(2.1460,6)$
&0.2910&$(6,0^{(5)},2)$&$(2.146,6)$&-5.4268\\
&15&10&$(0^{(2)},5,0^{(7)})$&$(2.1460,9)$&0.2339&$(3,0^{(8)},2)$&$(2.146,9)$&-10.8369\\
&15&12&$(0^{(2)},3,0^{(9)})$&$(2.1460,11)$&0.2102&$(2,0^{(10)},1)$&$(2.146,11)$&-15.2842\\
\hline
\end{tabular}
\label{uncontraint}
\end{table}

 The optimal censoring time and the guaranteed number of failures vary with the underlying Weibull distribution and the design parameters, indicating that the entropy criterion is not monotonic with respect to the design variables. Furthermore, increasing the sample size or the effective sample size generally produces larger absolute entropy values, indicating that the experiment contains richer information as more observations become available.
\subsection{Constrained optimal design}\label{cons}
The cost components are fixed at \(C_D=10\), \(C_\tau=50\), and \(C_0=50\), while the budget cost is taken as \(C_B=100\), \(125\), and \(150\). The resulting constrained A-optimal designs for the progressive hybrid censoring scheme are presented in Tables \ref{alpha=0.5}-\ref{m}.  Tables \ref{alpha=0.5}-\ref{alpha=2} present the optimal designs for Weibull distributions with shape parameters \(\alpha=0.5\), \(1\), and \(2\), respectively, when \(m=5\). The results are reported for different sample sizes \((n=10,15,\) and \(20)\) and budget costs \((C_B=100,125,\) and \(150)\). Table 7 presents the optimal designs for \(\alpha=0.5\) and \(n=15\), considering different effective sample sizes \((m=7,10,\) and \(12)\) and budget costs. In Tables~\ref{alpha=0.5}--\ref{m}, we report the optimal designs and the corresponding optimal objective values obtained under the GPHCS-I and PHCS-I schemes for both the A-optimality and entropy criteria. Throughout these tables, $\phi$ denotes the optimal objective value under the GPHCS-I scheme, whereas $\phi_{2}$ denotes the corresponding optimal objective value under the PHCS-I scheme. Therefore, we have \(
\phi_2 \leq \phi \leq \phi_1.
\)
 
To assess the performance of the proposed constrained A-optimal design and entropy design, we consider two performance measures. 
The first performance measure is the relative risk saving (RRS), which quantifies the percentage reduction in estimation risk achieved by the GPHCS-I relative to the PHCS-I, and is defined as
\[
\mathrm{RRS}
=
\frac{\phi_1-\phi}{|\phi_1|}\times100\%.
\]
A larger value of RRS indicates a greater improvement in estimation efficiency achieved by the GPHCS-I over the PHCS-I.
The second performance measure is the relative efficiency loss (REL), which quantifies the loss in estimation efficiency due to the budget constraint relative to the unconstrained A-optimal design. It is defined as
\[
\mathrm{REL}
=
\frac{\phi-\phi_2}{|\phi_2|}\times100\%.
\]
A smaller value of REL indicates that the constrained design is closer to the unconstrained A-optimal design and, therefore, incurs only a small loss in estimation efficiency due to the budget constraint. The values of RRS and REL are reported in Tables~\ref{alpha=0.5}-\ref{m} to compare the constrained GPHCS-I with the constrained PHCS-I and to assess the efficiency loss of the constrained design relative to the unconstrained A-optimal design.
  

\begin{table}[hbt!]
    \centering
 \small    \caption{Optimal design for different values of $C_B$ and $n$ when $\alpha=0.5$ and $m=5$}
    \begin{tabular}{|c|cc|ccc|ccc|cc|}
    \hline
 &  &&\multicolumn{3}{c}{GPHCS-I}& \multicolumn{3}{|c|}{PHCS-I-I}&\multirow{2}{*}{RRS}&\multirow{2}{*}{REL}\\
   \cline{2-9}
   &$n$&$C_B$&$\boldsymbol{R}^*$&$(T^*,l^*)$&$\phi$&$\boldsymbol{R}^*$&$T^*$&$\phi_1$&&\\
      \hline
    \multirow{9}{*}{\begin{turn}{90}
    A-optimal
\end{turn}}&  &100&(0, 0, 0, 0, 5)&(0.215, 3)& 1.6859&(0, 0, 0, 0, 5)&0.2806&1.7249&2.26 & 103.07\\
   &  10 &125&(0, 0, 0, 0, 5)&(2.954, 4)&0.9782&(0, 0, 0, 0, 5)&3.0272& 0.9783&0.01 & 17.83\\    
    & &150&(2, 0, 0, 0, 3)&(21.207, 0)&0.8842&(2, 0, 0, 0, 3)&21.207&0.8842&0.00 & 6.50\\
      \cline{2-11}
    & &100&(3, 1, 0, 2, 4)&(0.0274, 4)&1.7199&(5, 0, 0, 0, 5)&0.2615&1.8065&4.79 & 107.97\\
   &15&125&(5, 0, 0, 1, 4)&(2.433, 4)& 1.0429&(5, 0, 0, 1, 4)&2.474& 1.0430&0.01 & 26.11\\
    & &150&(7, 0, 0, 1, 2)&(4.598, 4)&0.9221& (7, 0, 0, 1, 2)&4.6329&0.9222&0.01 & 11.50\\
      \cline{2-11}
     &&100&(9, 0, 0, 0, 6)&(0.0371, 4)&1.7548&(10, 0, 0, 0, 5)&0.253&1.8486&5.07 & 112.37\\
  & 20  &125&(11,  0,  0,  0,  4)&(1.2515, 4)&1.0693&(11, 0, 0, 0, 4)&1.3820&1.0703&0.09 & 29.41\\
   &  &150&(12,  0,  0,  1,  2)&(21.207, 0)&0.9412&(12,  0,  0,  1,  2)&21.207&0.9412&0.00 & 13.91\\
         \hline
   \multirow{6}{*}{\begin{turn}{90}
    Entropy
\end{turn}}  & & 100&(0, 0, 0, 0, 5)&(0.2805, 0)&-19.1080&(0, 0, 0, 0, 5)&0.2805&-19.1080&0.00 & 3.37\\
   & 10 &125&(0, 0, 0, 0, 5)&(0.5594, 3)&-19.7741&(0, 0, 0, 0, 5)&0.5949&-19.7638&0.05 & 0.00\\
   &&150&(0, 0, 0, 0, 5)&(0.5594, 3)&-19.7741&(0, 0, 0, 0, 5)&0.5949&-19.7638&0.05 & 0.00\\
      \cline{2-11}
    &   & 100&(0, 0, 0, 6, 4)&(0.1179, 4)&-25.1733&(0, 0, 0, 0, 10)&0.1834&-24.7853&1.57 & 13.22\\
    &15 &125&(0, 0, 0, 0, 10)&(4.8051,0)&-29.0091&(0, 0, 0, 0, 10)&4.8051&-29.0091&0.00 & 0.00\\
   &&150&(0, 0, 0, 0, 10)&(4.8051,0)&-29.0091&(0, 0, 0, 0, 10)&4.8051&-29.0091&0.00 & 0.00\\
     \cline{2-11}
    & &100&(0,  0,  0, 11,  4)&(0.1275, 4)&-29.6055&(0, 0, 0, 11, 4)&0.1537&-28.4296&4.14 & 16.71\\
   &20  &125&(0, 0, 0, 0, 15)&(12.8489,3)&-35.5443&(0, 0, 0, 0, 15)&12.8636&-35.5443&0.00 & 0.00\\
    & &150&(0, 0, 0, 0, 15)&(12.8489,3)&-35.5443&(0, 0, 0, 0 ,15)&12.8636&-35.5443&0.00 & 0.00\\
     \hline
 
    \end{tabular}
    \label{alpha=0.5}
\end{table}
\begin{table}[hbt!]
    \centering
   \small  \caption{Optimal design for different values of $C_B$ and $n$ when $\alpha=1$ and $m=5$}  
    \begin{tabular}{|c|cc|ccc|ccc|cc|}
    \hline  & &&\multicolumn{3}{c}{GPHCS-I}& \multicolumn{3}{|c|}{PHCS-I-I}&\multirow{2}{*}{RRS}&\multirow{2}{*}{REL}\\
   \cline{2-9}
   &$n$&$C_B$&$\boldsymbol{R}^*$&$(T^*,l^*)$&$\phi$&$\boldsymbol{R}^*$&$T^*$&$\phi_1$&&\\
      \hline
\multirow{9}{*}{\begin{turn}{90}
    A-optimal
\end{turn}}&     &100&(1, 1, 0, 0, 3)&(0.1424, 3)&0.7626&(1, 0, 0, 0, 4)&0.4085&0.9313&18.11 & 157.89\\
   & 10  &125&(2, 1, 0, 0, 2)&(0.265, 4)&0.4540&(0,  0, 0, 0, 5)&0.7914&0.4618&1.69 & 53.53\\
&    &150&(2, 0, 1, 0, 2)&(4.605, 4)&0.3357&(2, 0, 1, 0, 2)&4.6052&0.3357&0.00 & 13.53\\
      \cline{2-11}
&&100&(7, 0, 0, 0, 3)&(0.1877, 3)&0.7607&(6, 0, 0, 0, 4)&0.3888&0.9128&16.67 & 169.47\\
 &    15&125&(3, 0, 0, 0, 7)&(3.780, 0)&0.4286&(3, 0, 0, 0, 7)&4.6052&0.4286&0.00 & 51.82\\
  &   &150&(8, 0, 0, 0, 2)&(2.126, 4)& 0.3204&(8, 0, 0, 0, 2)&2.1588&0.3205&0.03 & 13.50\\
     \cline{2-11}
   &  &100&(12,  1,  0,  0,  2)&(0.131, 3)&0.7551&(11, 0, 0, 0, 4)&0.3794&0.8985&15.96 & 175.48\\
  &20   &125&(9, 0, 0, 0, 6)&(0.891, 4)&0.4218&(9, 0, 0, 0, 6)&0.9364&0.4224&0.14 & 53.88\\
   &  &150&(13,  0,  0,  0,  2)&(4.605, 4)&0.3101&(13,  0, 0, 0, 2)&4.6052&0.3101&0.00 & 13.13\\
     \hline
\multirow{9}{*}{\begin{turn}{90}
   Entropy
\end{turn}}    &  & 100&(0, 0, 3, 0, 2)&(0.2589, 3)&-3.6408&(0, 0, 2, 0, 3)&0.4006&-2.9386&23.89 & 31.53\\
  &  10 &125&(0, 0, 0, 0, 5)&(0.6769, 4)&-5.2130&(0, 0, 0, 0, 5)&0.7913&-5.1827&0.58 & 1.95\\
  & &150&(0, 0, 0, 0, 5)&(4.6012, 4)&-5.3169&(0, 0, 0, 0, 5)&0.4604&-5.3169&0.00 & 0.00\\
    \cline{2-11}
   &    & 100&(0, 0, 8, 0, 2)&(0.3049, 3)&-5.0606&(0, 0, 8, 0, 2)&0.3453&-4.0152&26.04 & 35.08\\
  &  15 &125&(0, 0, 0, 0, 10)&(3.5205, 2)&-7.7948&(0, 0, 0, 0, 10)&3.4803&-7.7949&0.00 & 0.00\\
  & &150&(0, 0, 0, 0, 10)&(3.5205, 2)&-7.7948&(0, 0, 0, 0, 10)&3.4803&-7.7949&0.00 & 0.00\\
  \cline{2-11}
   &  &100&(0,  0,  13, 0, 2)&(0.2985, 3)&-5.9680&(0, 0, 13, 0, 2)&0.3134&-5.1699&15.44 & 36.76\\
   &20  &125&(0, 0, 0, 0, 15)&(2.8500, 0)&-9.4363&(0, 0, 0, 0, 15)&2.8500&-9.4363&0.00 & 0.00\\
    & &150&(0, 0, 0, 0, 15)&(2.8500, 0)&-9.4363&(0, 0, 0, 0, 15)&2.8500&-9.4363&0.00 & 0.00\\
     \hline
    \end{tabular}
    \label{alpha=1}
\end{table}
\begin{table}[hbt!]
    \centering
  \small  \caption{Optimal design for different values of $C_B$ and $n$ when $\alpha=2$ and $m=5$}
    \begin{tabular}{|c|cc|ccc|ccc|cc|}
    \hline
& &&\multicolumn{3}{c}{GPHCS-I}& \multicolumn{3}{|c|}{PHCS-I-I}&\multirow{2}{*}{RRS}&\multirow{2}{*}{REL}\\
   \cline{2-9}
   &$n$&$C_B$&$\boldsymbol{R}^*$&$(T^*,l^*)$&$\phi$&$\boldsymbol{R}^*$&$T^*$&$\phi_1$&&\\
      \hline    
\multirow{9}{*}{\begin{turn}{90}
   A-optimal
\end{turn}}     & &100&(5, 0, 0, 0, 0)&(0.4334, 2)&1.3409&(4, 0, 0, 0, 1)&0.5891&1.7909&25.13 & 214.10\\
    &  10&125&(5, 0, 0, 0, 0)&(0.748, 3)&0.8035&(4, 0, 0, 0, 1)&0.8569&0.9291&13.52 & 88.22\\
   & &150&(3, 0, 0, 0, 2)&(2.146, 4)&0.5428&(1, 0, 1, 2, 1)&2.1460&0.5619&3.40 & 27.15\\
       \cline{2-11}
     &&100&(10,  0,  0,  0,  0)&(0.4624, 2)&1.210&(9, 0, 0, 0, 1)&0.5733&1.5491&21.89 & 229.34\\
    & 15&125&(10,  0,  0,  0,  0)&(0.7571, 3)&0.6948&(9, 0, 0, 0, 1)&0.8476&0.7889&11.93 & 89.11\\
     &&150&(8, 0, 0, 0, 2)&(2.146, 4)&0.4691&(7, 0, 0, 2, 1)&1.7898&0.4798&2.23 & 27.68\\
    \cline{2-11}
    & &100&(15,  0,  0,  0,  0)&(0.4705, 2)&1.1176&(14, 0, 0, 0, 1)&0.5657&1.4029&20.34 & 236.52\\
   &  20&125&(15,  0,  0,  0,  0)&(0.7603, 3)&0.6278&(14, 0, 0, 0, 1)&0.8432&0.7050&10.95 & 89.04\\
  &   &150&(13,  0,  0,  0,  2)&(2.146, 4)&0.4241&(13, 0, 0, 0, 2)&0.2160&0.4241&0.00 & 27.70\\
     \hline
\multirow{9}{*}{\begin{turn}{90}
   Entropy
\end{turn}}     &  & 100&(0, 5, 0, 0, 0)&(0.1003, 2)&0.7586&(4,  0, 0, 0, 1)&0.5891&1.1102&31.67 & 127.09\\\
   & 10 &125&(0, 0, 0, 3, 2)&(0.1003,4)&-1.8054&(2, 0, 0, 0, 3)&0.7952&-1.2309&46.67 & 35.52\\
  & &150&(2, 0, 0, 0, 3)&(2.1459, 4)&-2.800&(2, 0, 0, 0, 3)&2.1459&-2.8000&0.00 & 0.00\\
      \cline{2-11}
   &    & 100&(3, 6, 0, 0, 1)&(0.5029, 2)&0.5244&(0, 9, 0, 0, 1)&0.5206&0.9086&42.28 & 119.17\\
   & 15 &125&(6, 0, 0, 2, 2)&(0.1003, 4)&-1.9316&(7, 0, 0, 0, 3)&0.7836&-1.2973&48.89 & 29.38\\
  & &150&(7, 0, 0, 0, 3)&(2.1459, 4)&-2.7951&(7, 0, 0, 0, 3)&2.1459&-2.7951&0.00 & 2.19\\
     \cline{2-11}
   &  &100&(6, 9, 0, 0, 0)&(0.1003, 2)&0.3408&(0, 14, 0, 0, 1)&0.5006&0.6464&47.28 & 111.92\\
  & 20  &125&(11, 0,0, 2, 2)&(0.1003,4)&-2.0405&(12, 0, 0, 0, 3)&0.7785&-1.3897&46.83 & 28.60\\
   &  &150&(12, 0, 0, 0, 3)&(2.1459, 4)&-2.8580&(12, 0, 0, 0, 3)&2.1459&-2.8580&0.00 & 0.00\\
  \hline
    \end{tabular}
    \label{alpha=2}
\end{table}
\begin{table}[hbt!]
    \centering
 \small   \caption{Optimal design for different values of $C_B$ and $m$ when $\alpha=2$ and $n=15$}
    \begin{tabular}{|c|cc|ccc|ccc|cc|}
    \hline
 & &&\multicolumn{3}{c}{GPHCS-I}& \multicolumn{3}{|c|}{PHCS-I-I}&\multirow{2}{*}{RRS}&\multirow{2}{*}{REL}\\
   \cline{2-9}
   &$m$&$C_B$&$\boldsymbol{R}^*$&$(T^*,l^*)$&$\phi$&$\boldsymbol{R}^*$&$T^*$&$\phi_2$&&\\
      \hline    
\multirow{9}{*}{\begin{turn}{90}
   A-optimal
\end{turn}}&  &100&$(8, 0^{(7)})$&(0.4819, 2)&1.3361&$(8, 0^{(7)})$&0.5506&1.5573&14.20 & 359.14\\
    &  7&125&$(6, 0^{(5)}, 2)$&(0.4765, 4)&0.7065&$(8, 0^{(7)})$&0.7897&0.8103&12.81 & 142.78\\
   & &150&$(8, 0^{(7)})$&(0.7248, 5)&0.4683&$(8, 0^{(7)})$&1.0420&0.5311&11.82 & 60.93\\

  \cline{2-11}
    & &100&$(5, 0^{(9)})$&(0.4599, 2)&1.4459&$(5, 0^{(9)})$&0.5014&1.6064&9.99 & 518.16\\
    & 10&125&$(5, 0^{(9)})$&(0.5437, 4)&0.7412&$(5, 0^{(9)})$&0.6930&0.8535&13.16 & 216.89\\
     &&150&$(3, 0^{(3)}, 1, 0^{(4)}, 1)$&(0.5276, 6)&0.5183&$(0, 5, 0^{(8)})$&0.8685&0.5602&7.48 & 121.59\\
    \cline{2-11}
     &&100&$(3,0^{(11)})$&(0.4430, 2)&1.4994&$(3, 0^{(11)}$&0.4777&1.6477&9.00 & 613.32\\
    & 12&125&$(3,0^{(11)})$&(0.5583, 4)&0.7955&$(3, 0^{(11)}$&0.6483&0.8862&10.23 & 278.45\\
     &&150&$(3,0^{(11)})$&(0.5933, 6)&0.5221&$(3, 0^{(11)}$&0.8144&0.5841&10.61 & 148.38\\
     \hline
 \multirow{9}{*}{\begin{turn}{90}
   Entropy
\end{turn}}   &    & 100&$(3, 5,0^{(5)})$&(0.4890, 2)&2.8809&$(8,0^{(6)})$&0.5506&3.0391&5.21 & 153.08\\
   & 7 &125&$(6, 0^{(2)},1,0^{(2)},1)$&(0.1003,4)&-1.5165&$(6,0^{(5)},2)$&0.7217&-0.7487&102.55 & 72.06\\
 &  &150&$(3,0^{(4)},1,4)$&(0.1002,6)&-3.8779&$(4,0^{(5)},4)$&0.8849&-3.5865&8.12 & 28.54\\
    \cline{2-11}
  &     & 100&$(1^{(4)},0^{(5)},1)$&(0.4676, 0)&24.1534&$(1^{(4)},0^{(5)},1)$&0.4676&24.1534&0.00 & 322.89\\
   & 10 &125&$(1^{(4)},0^{(5)},1)$&(0.5679, 4)&2.0713&$(1^{(4)},0^{(5)},1)$&0.6430&2.5499&18.77 & 119.11\\
  & &150&$(3,0^{(8)},2)$&(0.1002,6)&-4.1412&$(3,0^{(8)},2)$&0.8148&-3.8208&8.39 & 61.79\\
   \cline{2-11}
  &   &100&$(1^{(3)},0^{(9)})$&(0.4665, 0)&32.9902&$(1^{(3)},0^{(9)})$&0.4665&32.9902&0.00& 315.85\\
  & 12  &125&$(1^{(3)},0^{(9)})$&(0.6362, 0)&5.0285&$(1^{(3)},0^{(9)})$&0.6362&5.0285&0.00 & 132.90\\
  &   &150&$(2,0^{(10)},1)$&(0.7778, 4)&-3.3032&$(1^{(2)},0^{(9)},1)$&0.7830&-3.1161&6.00 & 78.39\\
     \hline
 
    \end{tabular}
    \label{m}
\end{table}
Tables~\ref{alpha=0.5}--\ref{m} present the constrained optimal designs obtained under different budget costs, sample sizes, effective sample sizes, and Weibull distributions. As expected, the available experimental budget has a significant influence on the optimal design. When the allowable budget cost \(C_B\) is small, the optimization favours shorter censoring times and fewer guaranteed failures to satisfy the budget constraint. As \(C_B\) increases, the feasible region expands, allowing the design to approach the unconstrained optimum. Consequently, the A-optimality criterion decreases and REL becomes progressively smaller.

The comparison between GPHCS-I and PHCS-I demonstrates the advantage of introducing the guaranteed number of failures. Under almost all parameter configurations, GPHCS-I produces smaller values of the objective function than PHCS-I while satisfying the same budget constraint. The corresponding RRS is consistently positive, confirming that GPHCS-I provides superior estimation efficiency. The improvement is particularly noticeable when the available budget is limited, indicating that the additional flexibility offered by the guaranteed number of failures is most beneficial under severe economic constraints.

The Weibull shape parameter also influences the magnitude of the improvement. For decreasing hazard rates \((\alpha=0.5)\), the improvement over PHCS-I is relatively modest because failures occur later during the experiment. In contrast, for increasing hazard rates \((\alpha=2)\), failures occur earlier, enabling GPHCS-I to utilize the available testing time more efficiently and resulting in larger gains in estimation precision. Similar conclusions are observed when the sample size or the effective sample size increases, although the marginal improvement gradually decreases because the Fisher information increases at a diminishing rate.

The entropy-optimal designs exhibit a different pattern from the A-optimal designs. The optimal censoring schemes depend more strongly on the underlying lifetime distribution and the available budget, reflecting the fundamentally different objective of maximizing the information content of the experiment. Nevertheless, GPHCS-I consistently produces entropy values that are at least as good as, and often better than, those obtained under PHCS-I, demonstrating the effectiveness of the proposed generalized censoring scheme from an information-theoretic perspective.
\subsection{Multi-objective Design}\label{multi}

To illustrate the performance of the proposed multi-objective optimization approaches, extensive numerical studies are carried out under different model and cost settings. In Table~\ref{minimax}, the fixed cost components are taken as $C_D=3$, $C_{\tau}=75$, and $C_0=50$. The lifetime distribution is assumed to follow a Weibull distribution with scale parameter $\eta=1$ and shape parameter $\alpha=0.5$, $1$, and $2$. The sample size is chosen as $n=10$, $15$, and $20$, while the maximum number of observed failures is fixed at $m=5$.

To investigate the effect of the cost structure, Table~\ref{Ct} reports the optimal compromise designs for different combinations of the failure cost and operating cost. In this case, the Weibull parameters are fixed at $\alpha=0.5$ and $\eta=1$, with $n=15$, $m=7$, and $C_0=50$. The failure cost is taken as $C_D=3$, $7$, and $10$, whereas the operating cost is chosen as $C_{\tau}=25$, $50$, and $75$.

In Tables~\ref{minimax} and \ref{Ct}, $RRS_1$ and $RRS_2$ measure the relative deviation of the compromise design from the individually optimal A-optimal and cost-optimal designs, respectively. They are defined as
\[
RRS_1=\left(\frac{\phi_1(\boldsymbol{\zeta})}{\phi_1(\boldsymbol{\zeta}^*_1)}-1\right)\times100\%,\qquad RRS_2=\left(\frac{\phi_2(\boldsymbol{\zeta}^*)}
{\phi_2(\boldsymbol{\zeta}^*_2)}-1
\right)\times100\%,
\]
where $\boldsymbol{\zeta}^*$ denotes the corresponding compromise design. Smaller values of $RRS_1$ and $RRS_2$ indicate that the compromise design is closer to the respective individually optimal design. The values of $\phi_1(\boldsymbol{\zeta}_1^*)$ and $\phi_2(\boldsymbol{\zeta}_2^*)$ are obtained from the single-objective optimization procedures described in Section~\ref{uncons}.

\begin{table}[hbt!]
    \centering
    \caption{Multi-objective design for different values of $\alpha$ and $n$}
 \begin{tabular}{|c|cc|cccc|cc|c|}
    \hline
   &$\alpha$&$n$&$\boldsymbol{R}^*$&$(T^*,l^*)$&$\bar\phi_2(\boldsymbol{\zeta}^*)$&$\bar\phi_1(\boldsymbol{\zeta^*})$&$RRS_1$&$RRS_2$&$\phi_1(\boldsymbol{\zeta}_1^*)$\\
      \hline
\multirow{9}{*}{\begin{turn}{90}
  Minimax
\end{turn}}&   & 10 &$(0^{(4)}, 5)$&(0.2306, 3)&1.6443&100.8849&0.9806&0.9807&50.9341\\
 & 0.5&  15& $(4, 0^{(3)}, 6)$&(0.0726, 4)&1.6367&101.6629&0.9791&0.9791&51.3667\\
&&20& $(10, 0^{(3)},5)$&(0.0527, 4)&1.6347&102.4415&0.9784&0.9784&51.7793\\
 \cline{2-10}
  &     & 10 &$(4, 0^{(4)}, 1)$&(0.2953, 3)&0.6277&109.2434&1.1220&1.1220&51.4801\\
  & 1& 15&$(3, 1,0,6,0)$&(0.0101, 4)&0.5929&109.6843&1.1120&1.0041&21.9331\\
   & &20&$(9, 0, 1, 0, 5)$&(0.0101, 4)&0.5766&110.1984&1.1045&1.1033&52.3641\\
  \cline{2-10}
    & &   10 &$(3,2, 0^{(3)})$&(0.1002, 3)&0.8889&116.8777&1.0875&1.0822&55.9882\\
 &  2 &15&$(9,1,0^{(3)})$&(0.1002, 3)&0.7568&116.7901&1.0692&1.0602&56.4407\\
 & &  20& $(15, 0^{(4)})$&(0.1003, 3)&0.6552&117.7448&1.0703&0.9729&56.8711  \\
     \hline
\multirow{9}{*}{\begin{turn}{90}
  Compound
\end{turn}}  &     &10&$(0^{(3)},5,0)$&(0.0002,4)&1.4310&104.3895&1.0495&0.7237&50.9341\\
   & 0.5&15&$(5,0^{(2)},5,0)$&(0.0002,4)&1.5740&102.5707&0.9032&0.9968&51.3667\\
   & &20&$(10, 0^{(2)},5,0)$&(0.0002,4)&1.6532&101.7446&0.9650&1.0008&51.7793\\
    \cline{2-10}
    &&10&$(3,0^{(3)},2)$&(0.119, 3)&0.6889&103.3334&1.0072&1.3292&51.4801\\
   & 1&15&$(5,0^{(3)},5)$&(0.0101,4)&0.5463&112.2817&1.1620&0.9352&51.9331\\
    &&20&$(11,0^{(3)},4)$&(0.0101, 4)&0.5129&114.2262&1.1814&0.8710&52.3641\\
     \cline{2-10}
    & &   10 &$(5, 0^{(4)})$&(0.1002,3)&0.8217&119.4147&1.1328&0.9247&55.9882\\
  & 2 &15&$(10,0^{(4)})$&(0.1003,3)&0.7194&118.3191&1.0963&0.9582&56.4407\\
 & &  20& $(15, 0^{(4)})$&(0.1003, 3)&0.6552&117.7448&1.0703&0.9729&56.8711 \\
     \hline
  
    \end{tabular}
    \label{minimax}
\end{table}
Table~\ref{minimax} compares the minimax and compound formulations for the multi-objective optimization problem. The minimax criterion produces balanced compromise solutions by minimizing the largest normalized deviation among the competing objectives. Consequently, the two normalized objective values remain very close across all Weibull models and sample sizes, indicating that neither statistical efficiency nor experimental cost dominates the optimization process. This balanced behaviour is one of the principal advantages of the minimax formulation.

The compound criterion exhibits different characteristics. Since the optimization is based on a weighted aggregation of the objectives, the resulting designs are influenced by the relative contribution of each objective to the weighted sum. Consequently, one objective may improve substantially at the expense of the other, producing less balanced solutions than those obtained by the minimax criterion. Although both formulations generate feasible compromise designs, the minimax approach provides greater robustness when no prior preference between statistical efficiency and economic cost is available.
\begin{table}[hbt!]
    \centering
 \small   \caption{Multi-objective design for different values of $C_\tau$ and $C_D$ when $n=15$ and $m=7$}
 \begin{tabular}{|c|cc|cccc|cc|c|}
    \hline
&$C_\tau$&$C_D$&$\boldsymbol{R}^*$&$(T^*,l^*)$&$\phi(\boldsymbol{\zeta}^*)$&$TC(\boldsymbol{\zeta^*})$&$RRS_1$&$RRS_2$&$\phi_1(\boldsymbol{\zeta}_1^*)$\\
      \hline
 \multirow{9}{*}{\begin{turn}{90}
 Minimax
\end{turn}}&      25&3&$(0^{(6)},8)$&(0.2669, 6)&0.9024&77.0738&0.5288&0.5288&50.4148\\
      &25&7&$(4, 0^{5)},4)$&(0.1001, 5)&1.0967&94.6912&0.8580&0.8580&50.9649\\
      &25&10&$(6, 0^{(5)},2)$&(0.2977, 4)&1.2187&106.0760&1.0646&1.0646&51.3775\\
\cline{2-10}
      &50&3&$(0^{(6)},8)$&(0.1213, 6)&0.9687&82.7385&0.6411&0.6411&50.4171\\
      &50&7&$(3, 0^{(5)}, 5)$&(0.0812, 5)&1.1640&100.5003&0.9718&0.9718&50.9672\\
     & 50&10&$(1^{(2)},0^{(2)},6,0^{(2)})$& (0.0001, 5)&1.2716&111.9770&1.1793&1.1541&51.3797\\
\cline{2-10}
      &75&3&$(0^{(6)},8)$&(0.2769, 5)&1.0300&87.9764&0.7449&0.7449&50.4194\\
     & 75&7&$(0^{(6)},8)$&(0.1523,5)&1.2139&104.8189&1.0565&1.0564&50.9695\\
    &  75&10&$(1,0^{(3)},7,0^{(2)})$&(0.0001, 5)&1.3169&115.7616&1.2529&1.2310&51.3820\\
      \hline
 \multirow{9}{*}{\begin{turn}{90}
  Compound
\end{turn}}    &   25&3&$(0^{(6)},8)$&(0.6561, 6)&0.7941&79.8929&0.5847&0.3453&50.414\\
    &  25&7&$(2,0^{(4)},6,0)$&(0.0003, 6)&0.8953&101.7644&0.9967&0.5167&50.9649\\
    &  25&10&$(7,0^{(2)},1,0^{(3)})$&(0.0001,4)&1.3636&100.6156&0.9584&1.3101&51.3775\\
 \cline{2-10}
    &  50&3&$(0^{(5)},8,0)$&(0.0005, 6)&0.9899&82.0024&0.6265&0.6770&50.4171\\
     & 50&7&$(2,0^{(3)},2,3,1)$&(0.0001, 5)&1.2484&97.3161&0.9094&1.1149&50.9672\\
     & 50&10&$(3,0^{(3)},5,0^{(2)})$&(0.0001, 5)&1.1801&114.7203&1.2328&0.9991&51.3797\\
  \cline{2-10}
     & 75&3&$(0^{(5)},8,0)$&(0.0005, 6)&0.9899&89.0036&0.7653&0.6770&20.4194\\
     & 75&7&$(2,0^{(3)},4,0,2)$&(0.0001, 5)&1.2484&103.4741&1.0301&1.1149&50.9695\\
     & 75&10&$(1,0^{(3)},6,0,1)$&(0.0001, 5)&1.3169&115.7616&1.2529&1.2310&51.3820\\
      \hline
    \end{tabular}
    \label{Ct}
\end{table}
Table~\ref{Ct} investigates the influence of the inspection cost \(C_{\tau}\) and the failure cost \(C_D\) on the multi-objective designs. As expected, increasing either cost component leads to a higher total experimental cost and therefore encourages shorter experiments with more economical censoring schemes. The optimal progressive censoring pattern adapts automatically to these changes, demonstrating the flexibility of the proposed optimization framework.

The minimax solutions remain remarkably stable over different cost structures, with the normalized objective values remaining close to one another. This indicates that the minimax criterion successfully maintains a balanced compromise despite changes in the economic environment. In contrast, the compound formulation exhibits greater sensitivity to the cost parameters because the weighted objective function changes directly with the relative magnitude of the cost and information components. The numerical investigation confirms that the proposed optimization methodology effectively balances statistical efficiency and experimental cost while exploiting the additional flexibility of the GPHCS-I scheme. The theoretical monotonicity properties established in the paper also contribute to the computational efficiency of the optimization algorithm by substantially reducing the feasible search space.

\section{Data Analysis}\label{data}

Progressively censored data have been widely used in reliability studies since the pioneering work of \citet{herd1956estimation}, whose PhD thesis presented one of the earliest real-life applications involving gyroscope lifetime data. In this section, we illustrate the practical applicability of the proposed methodology using a real data set. The data, originally reported by \citet{Nelson_book} (Table~6.1, p.~228), consist of failure times of an insulating fluid obtained from an accelerated life test. The data were subsequently analyzed by \citet{viveros1994interval} in the context of progressive censoring and have since been widely used to assess statistical methods for censored life-testing experiments.

Recently, \citet{chakraborty2025application} analyzed this data set and showed that the Weibull distribution provides an adequate fit to the observed failure times. The estimated Weibull parameters are $\alpha=0.974$ and $\eta=0.108$. Following their analysis, we assume that the lifetimes follow a Weibull distribution with these parameter values. The original experiment consists of $n=18$ test units with a maximum of $m=8$ observed failures. These settings are used to illustrate the proposed optimal design methodology under the GPHCS-I scheme.

The cost coefficients are taken to be the same as those used in Section~\ref{numerical}, namely, $C_0=50$, $C_D=10$, and $C_{\tau}=50$. For the constrained optimization problem, the available budget is fixed at $C_B=65$. The resulting optimal designs under different design criteria are reported in Table~\ref{real_cons}.
\begin{table}[hbt!]
    \centering
 \caption{Design for real data-set}
 \begin{tabular}{c|c|ccccc}
    \hline
Condition&Design&$\boldsymbol{R}^*$&$(T^*,l^*)$&A-optimal Value&Total Cost&Entropy\\
 \hline
\multirow{2}{*}{Unconstrained} & A-optimal&$(0,11,0^{6)})$&(44.4147, 7)&0.0508&1314.84&36.9473\\   
 & Entropy&$(0,3,7,1,0^{(4)})$&(0.0824, 3)&0.1177&159.779&3.8987\\
 &Cost&$(11, 0^{(7)})$&(0.0823, 0)&0.1232&55.9165&22.3619\\
      \hline
 \multirow{2}{*}{Constrained} & A-optimal&$(0,10,0^{(5)},1)$&(0.2082, 0)&0.0656&64.9981&51.4266\\   
&  Entropy&$(11,0^{(7)})$&(0.2153, 0)&0.0700&64.9922&20.9194\\
  \hline
  \multirow{2}{*}{Multi-objective}&Minimax&$(0,11,0^{(6)})$&(0.2503, 0)&0.0617&67.9520&20.5877\\
  &Compound&$(0,11,0^{(6)})$&(0.2565, 0)&0.0612&68.3809&34.1618\\
  \hline
  All-three&Minimax&\multirow{1}{*}{$(4, 5, 2,0^{(5)})$}&\multirow{1}{*}{(2.2998, 0)}&\multirow{1}{*}{0.05218}&\multirow{1}{*}{193.1561}&\multirow{1}{*}{10.59478}\\
  \hline
    \end{tabular}
    \label{real_cons}
\end{table}

Table~\ref{real_cons} demonstrates that the proposed optimization procedures lead to substantially different GPHCS-I designs depending on the design objective. Under unconstrained optimization, the A-optimal design recommends a relatively large censoring time together with a large guaranteed number of failures, yielding the smallest A-optimality value at the expense of a considerably higher experimental cost. In contrast, the entropy-optimal design recommends an earlier termination of the experiment, leading to a lower-cost design while preserving a high level of information in the observed sample. As expected, the cost-optimal design minimizes the experimental cost but results in inferior statistical efficiency compared with both the A-optimal and entropy-optimal designs.

When the budget constraint is imposed, both the A-optimal and entropy-optimal procedures automatically adjust the censoring time and the progressive removal pattern to satisfy the prescribed budget. The constrained A-optimal design provides the best estimation precision among all feasible designs, whereas the entropy-optimal design achieves the minimum entropy under the same budget. These results illustrate the flexibility of the proposed methodology in accommodating practical economic limitations without sacrificing statistical efficiency.

The multi-objective optimization results further demonstrate the effectiveness of simultaneously balancing statistical efficiency and experimental cost. For this data set, both the minimax and compound approaches identify the same progressive removal vector, although the corresponding censoring times differ slightly. The minimax solution yields a marginally smaller experimental cost, whereas the compound design provides a slightly better A-optimality value. Finally, when entropy is incorporated as a third objective, the proposed minimax formulation successfully determines a compromise design that simultaneously balances estimation precision, information content, and experimental cost. This illustrates the flexibility of the minimax framework in handling multiple conflicting objectives, a feature that is difficult to achieve using conventional compound optimal design methods.

\section{Conclusion}\label{con}

This paper developed optimal life-testing designs under  GPHCS-I by considering statistical efficiency, information content, and experimental cost. A new expression for the Shannon entropy under GPHCS-I was derived and employed as an information-theoretic design criterion alongside the A-optimality criterion. Theoretical monotonicity properties of the A-optimality criterion and the cost function were established and exploited to reduce the optimization search space, thereby improving computational efficiency.

Unconstrained, constrained, and multi-objective optimization models were developed, and an efficient variable neighborhood search algorithm was proposed to determine the optimal censoring schemes. The numerical results demonstrated that the proposed GPHCS-I designs consistently outperform the conventional progressive hybrid censoring scheme in terms of estimation efficiency and information content under comparable cost constraints. Furthermore, the minimax formulation provided balanced compromise solutions and offered greater scalability and computational advantages over the compound formulation for multi-objective optimization.

The proposed methodology is sufficiently general to accommodate other lifetime distributions and alternative optimality criteria. Future research may consider Bayesian optimal designs, adaptive progressive hybrid censoring schemes, robust optimization under parameter uncertainty, and Pareto-based evolutionary multi-objective optimization methods. Another promising direction is to incorporate Shannon entropy directly into the multi-objective optimization framework, thereby simultaneously optimizing statistical precision, information content, and experimental cost. Such extensions would provide a more comprehensive decision-support framework for designing efficient and economically feasible reliability experiments.

\bibliographystyle{apalike}
\bibliography{BaysWarranty}

\vspace{0.5cm}
\appendix
\section{Compound Design Plots}
\normalsize

The appendix presents all graphical illustrations corresponding to the compound optimal design obtained in the numerical studies. These figures illustrate the variation of the normalized objective functions with respect to the preference weight $\lambda$ and identify the cut-point $\lambda^{*}$, at which the optimal compound design is obtained.


\begin{figure}[htbp]
\centering

\begin{subfigure}{0.3\textwidth}
\centering
\includegraphics[width=\linewidth]{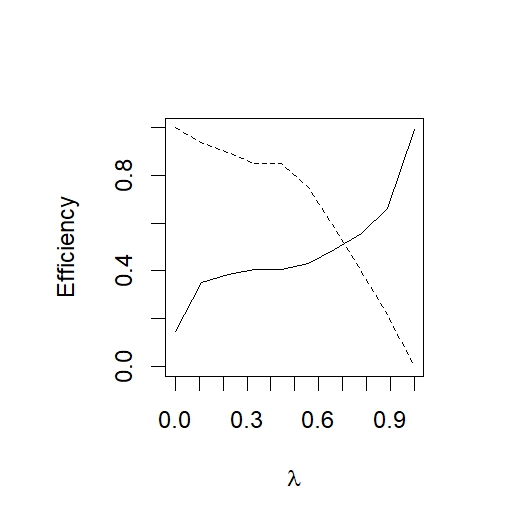}
\caption{$n=10,\;m=5$.}
\end{subfigure}
\hfill
\begin{subfigure}{0.3\textwidth}
\centering
\includegraphics[width=\linewidth]{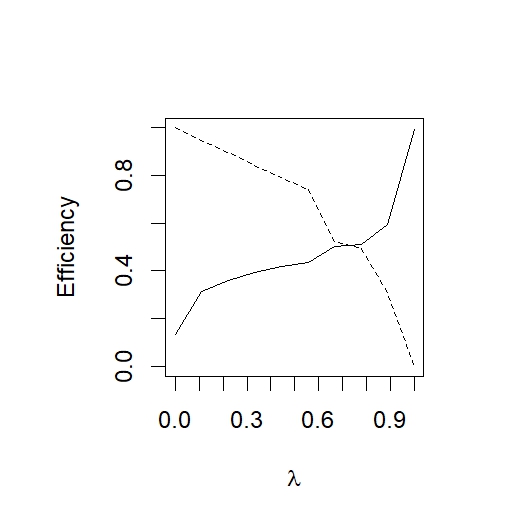}
\caption{$n=15,\;m=5$.}
\end{subfigure}
\begin{subfigure}{0.3\textwidth}
\centering
\includegraphics[width=\linewidth]{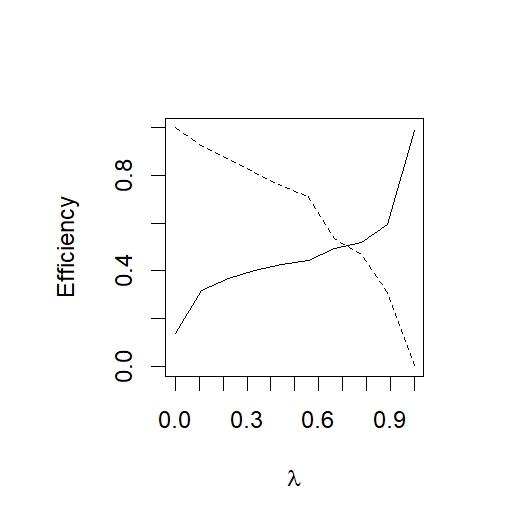}
\caption{$n=20,\;m=5$.}
\end{subfigure}
\caption{Compound objective function plots for the Weibull distribution with $(\alpha,\eta)=0.5,1$.}
\label{fig:appendix05}
\end{figure}
\begin{figure}[htbp]
\centering

\begin{subfigure}{0.3\textwidth}
\centering
\includegraphics[width=\linewidth]{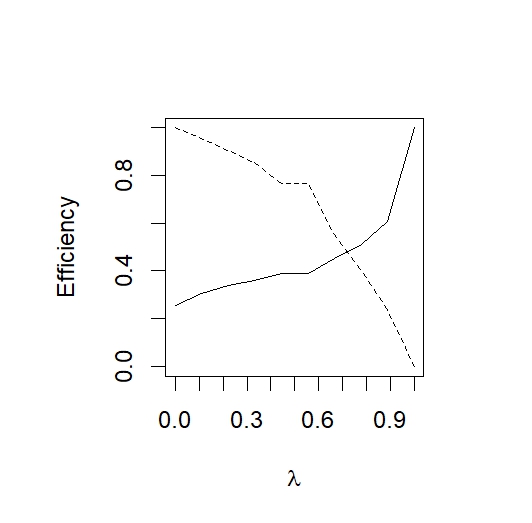}
\caption{$n=10,\;m=5$.}
\end{subfigure}
\begin{subfigure}{0.3\textwidth}
\centering
\includegraphics[width=\linewidth]{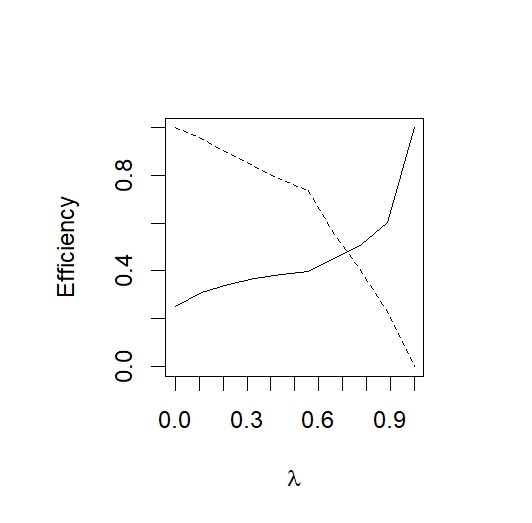}
\caption{$n=15,\;m=5$.}
\end{subfigure}
\begin{subfigure}{0.3\textwidth}
\centering
\includegraphics[width=\linewidth]{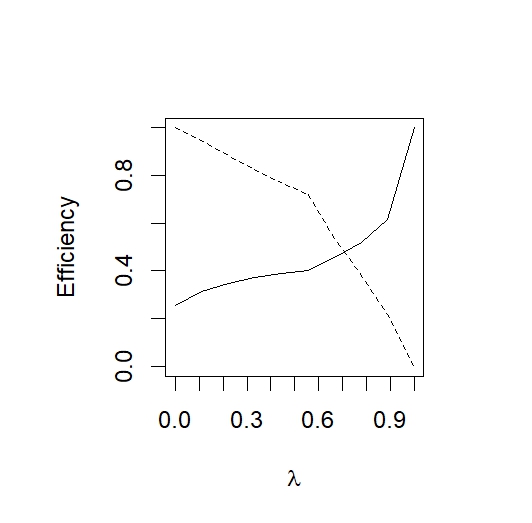}
\caption{$n=20,\;m=5$.}
\end{subfigure}
\caption{Compound objective function plots for the Weibull distribution with $(\alpha,\eta)=(1,1)$.}
\label{fig:appendix1}
\end{figure}

\begin{figure}[htbp]
\centering

\begin{subfigure}{0.3\textwidth}
\centering
\includegraphics[width=\linewidth]{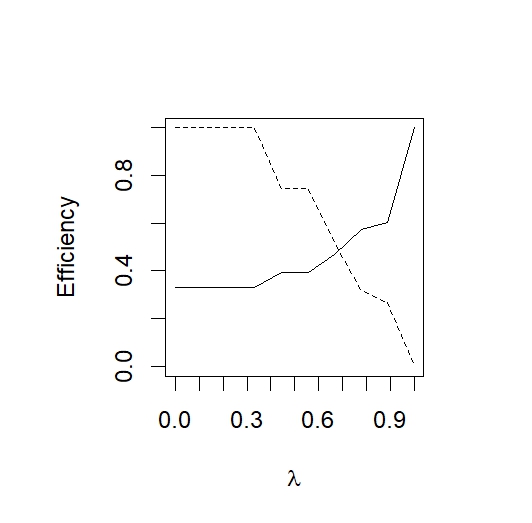}
\caption{$n=10,\;m=5$.}
\end{subfigure}
\begin{subfigure}{0.3\textwidth}
\centering
\includegraphics[width=\linewidth]{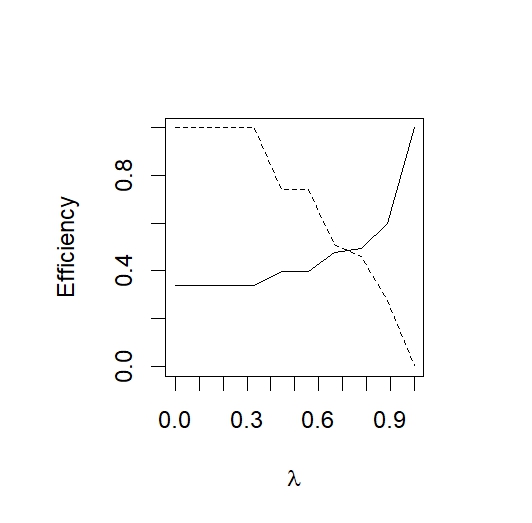}
\caption{$n=15,\;m=5$.}
\end{subfigure}
\begin{subfigure}{0.3\textwidth}
\centering
\includegraphics[width=\linewidth]{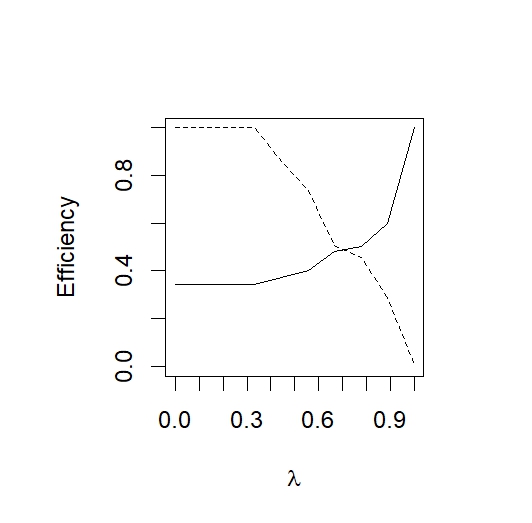}
\caption{$n=20,\;m=5$.}
\end{subfigure}
\caption{Compound objective function plots for the Weibull distribution with $(\alpha,\eta)=(2,1)$.}
\label{fig:appendix2}
\end{figure}

\begin{figure}[htbp]
\centering
\begin{subfigure}{0.3\textwidth}
\centering
\includegraphics[width=\linewidth]{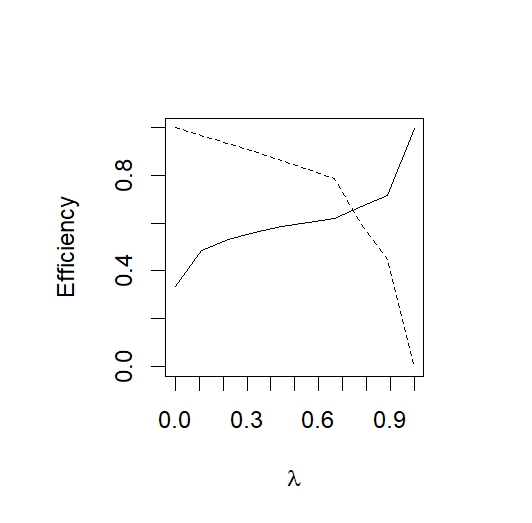}
\caption{$C_{\tau}=25,\;C_D=3$.}
\end{subfigure}
\begin{subfigure}{0.3\textwidth}
\centering
\includegraphics[width=\linewidth]{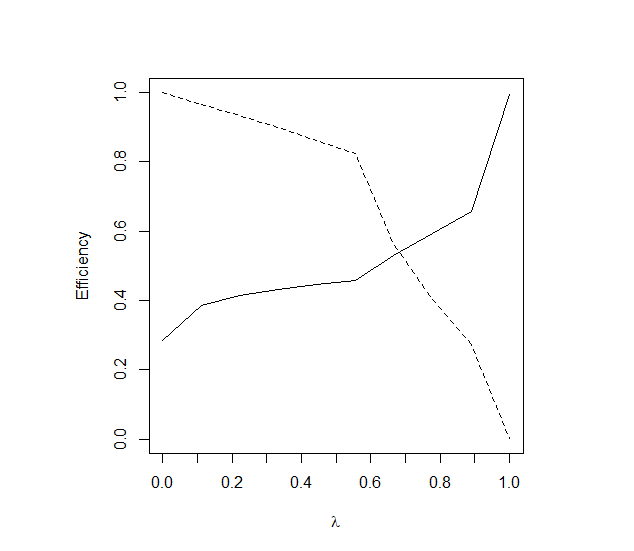}
\caption{$C_{\tau}=25,\;C_D=7$.}
\end{subfigure}
\begin{subfigure}{0.3\textwidth}
\centering
\includegraphics[width=\linewidth]{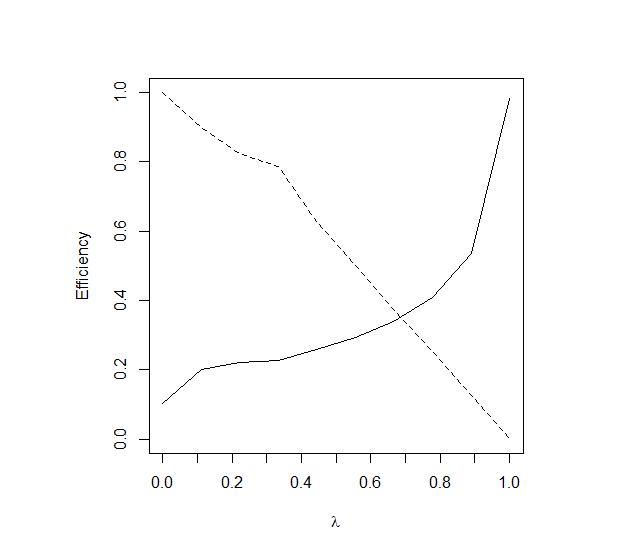}
\caption{$C_{\tau}=25,\;C_D=10$.}
\end{subfigure}
\begin{subfigure}{0.3\textwidth}
\centering
\includegraphics[width=\linewidth]{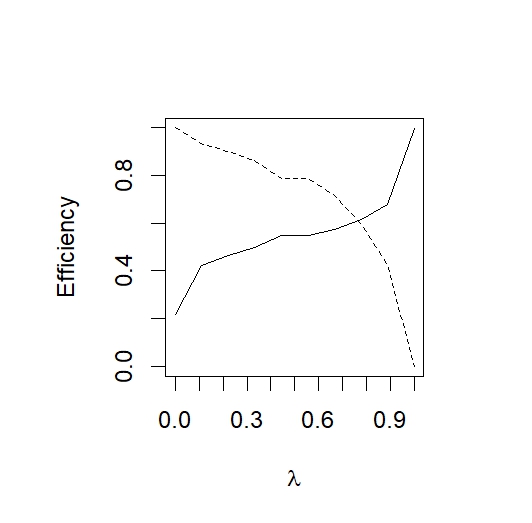}
\caption{$C_{\tau}=50,\;C_D=3$.}
\end{subfigure}
\begin{subfigure}{0.3\textwidth}
\centering
\includegraphics[width=\linewidth]{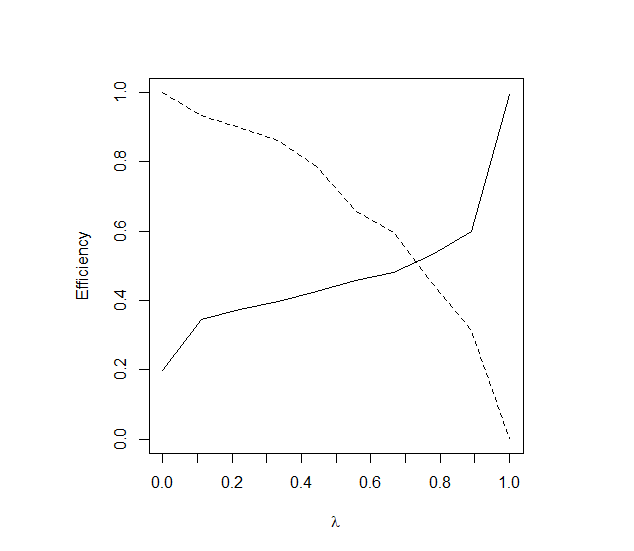}
\caption{$C_{\tau}=50,\;C_D=7$.}
\end{subfigure}
\begin{subfigure}{0.3\textwidth}
\centering
\includegraphics[width=\linewidth]{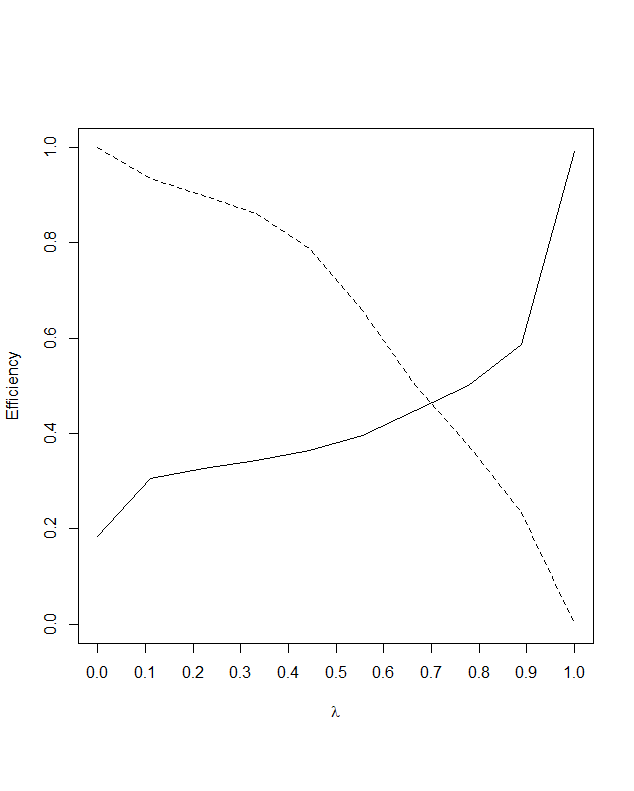}
\caption{$C_{\tau}=50,\;C_D=10$.}
\end{subfigure}
\begin{subfigure}{0.3\textwidth}
\centering
\includegraphics[width=\linewidth]{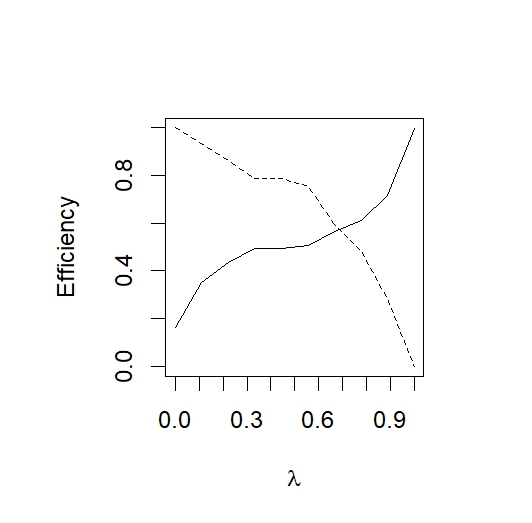}
\caption{$C_{\tau}=75,\;C_D=3$.}
\end{subfigure}
\begin{subfigure}{0.3\textwidth}
\centering
\includegraphics[width=\linewidth]{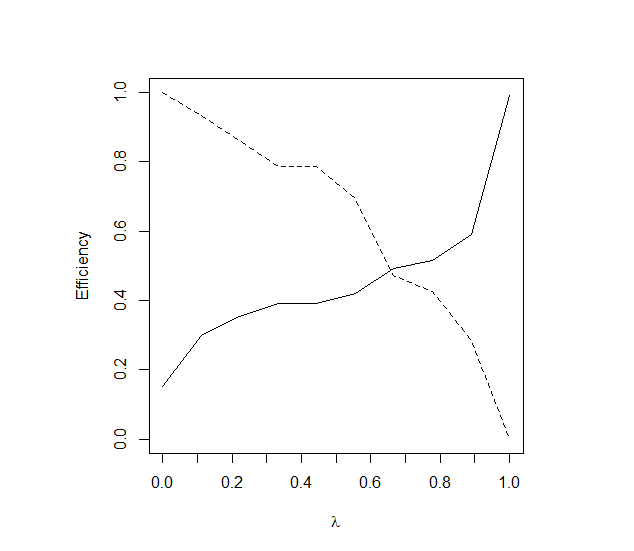}
\caption{$C_{\tau}=75,\;C_D=7$.}
\end{subfigure}
\begin{subfigure}{0.3\textwidth}
\centering
\includegraphics[width=\linewidth]{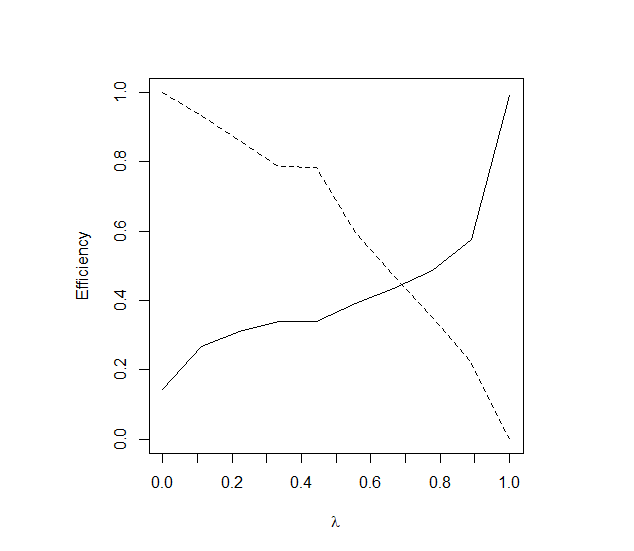}
\caption{$C_{\tau}=75,\;C_D=10$.}
\end{subfigure}
\caption{Compound objective function plots corresponding to the sensitivity analysis reported in Table~\ref{Ct}.}
\label{fig:appendixCt}
\end{figure}

\end{document}